\theoremstyle{plain}
\newtheorem{Lemma}{Lemma}
\newtheorem{Theorem}{Theorem}
\newtheorem{Corollary}{Corollary}
\newtheorem{Definition}{Definition} 
\begin{document}
\author{Cunyan Ma, Xiaoya Li, Chen He, \textit{Member, IEEE}, Jinye Peng, and Z. Jane Wang, \textit{Fellow, IEEE}}
 


\title{ Effects of 3D Position Fluctuations on Air-to-Ground mmWave UAV Communications
}
\maketitle
\begin{abstract} Millimeter wave (mmWave)-based  unmanned aerial vehicle (UAV) communication is a promising candidate for future communications due to its  flexibility and sufficient bandwidth.
However, random fluctuations in the position of hovering UAVs will lead to random variations in the  blockage  and signal-to-noise ratio (SNR) of  the UAV-user link, thus affecting   the quality
of service (QoS) of the system.
 To   assess the impact of UAV position fluctuations on the QoS of air-to-ground  mmWave UAV communications,  this paper develops a tractable analytical model that  jointly captures the features of three-dimensional (3D) position fluctuations of hovering UAVs and blockages of  mmWave (including static, dynamic, and self-blockages). 
   With this model, we derive the closed-form expressions for  reliable service probability respective to blockage probability of  UAV-user links,
    and  coverage probability respective to   SNR, respectively.
The results indicate that the greater the  position fluctuations of UAVs, the  lower the reliable service probability and  coverage probability.
 The degradation of these two  evaluation  metrics  confirms that the performance of air-to-ground mmWave UAV systems largely depends on the  UAV position fluctuations, and  the stronger the fluctuation,  the worse the QoS.
Finally, Monte Carlo simulations demonstrate the above results and show UAVs' optimal location to maximize the reliable service and coverage probability, respectively.

\end{abstract}
\IEEEpeerreviewmaketitle
\begin{IEEEkeywords}
Millimeter wave, unmanned aerial vehicle communications, 
 position fluctuations, blockages, QoS.
\end{IEEEkeywords}


\section{Introduction}
The advantage  of unmanned aerial vehicle (UAV)  makes it more likely to be employed as the aerial  base station (BS) for future communications, which will significantly  satisfy the diversified requirements on data rate, transmission delay, system capacity, etc.\cite{2022toward,2023A}.
Compared with traditional  BSs,  UAVs have higher flexibility and lower cost. 
Therefore, UAV communications have attracted increasing attention and are particularly suitable for on-demand scenarios, such as hot spots or disaster sites where existing communications infrastructures are overloaded or damaged
\cite{2023UAV-Correlated}.
Meanwhile,  to achieve ultra-high data rates, more and more researchers consider millimeter wave (mmWave) with large available bandwidth as the carrier frequency of UAV communications\cite{2020Clustered}. Such communications are hereinafter called as  mmWave UAV communications.   mmWave UAV communications have significant advantages  in flexibility, system capacity,  etc. 

One of the critical challenges of deploying mmWave UAVs is  that the position and orientation of hovering UAVs, compared to ground  BSs, can change randomly due to wind, inaccurate positioning,  
etc.\cite{2020Statistical,2022Millimeter,20213DChannel}. 
Such random  variations will lead to random changes in  the quality of the
 UAV-user links, which will lead to  unreliable  communications. Some innovative studies have investigated the impact of hovering  fluctuations of UAVs on mmWave UAV communication systems. For example,
the work in \cite{2022Millimeter} experimentally investigated the effect of random fluctuations of hovering UAVs
  on mmWave signals. It demonstrated that the mismatch of antenna beams caused by the random fluctuations  would deteriorate the performance of UAV-to-ground  links.
The authors in \cite{20213DChannel,2020Analytical} investigated the influence of UAV orientation fluctuations on system outage performance. They showed that the beam misalignment and antenna gain mismatch caused by UAV orientation fluctuations lead to a decline in the reliability of the  communication system.   The authors in
 \cite{wang2021jittering} analyzed the effect of UAV attitude angle fluctuations on the antenna orientation, and they proposed a  UAV beam training method to improve the accuracy of antenna angle estimation.
However, \cite{2022Millimeter,20213DChannel, 2020Analytical,wang2021jittering} focused on investigating the impact of antenna beam misalignment or poor beam selection caused by hovering UAV fluctuations on system performances. 

In addition to the above impacts on antenna beam alignment,   fluctuations, especially position fluctuations of UAVs may  also significantly impact mmWave blockage characteristics. 
As we know, mmWave  signals are easily blocked by obstacles, and 
the blockage state is mainly determined by the relative positions of transceivers and obstacles \cite{2020Efficient}. 
The random position deviation  of hovering UAVs will make the blockage characteristics more randomized, thus affecting the  system's performance.
So investigating the influence of UAV position fluctuations on  link blockage,  then evaluating the   QoS is  necessary.
Specifically, UAV position fluctuation 
is three-dimensional (3D), 
and a line-of-sight (LoS) mmWave link may encounter three types of  blockages:  static blockage caused by static obstacles such as  buildings, dynamic blockage caused by  moving blockers, and self-blockage caused by the user's own body\cite{2019The}. 
Therefore,   it is essential to obtain the relationship between  the 3D position fluctuations of hovering UAVs and the three kinds of blockages of mmWave, as well as the impact on the system performance.

The effect of  blockages on the mmWave system without considering the position fluctuations of UAVs has been widely studied.
 For instance, the work in  \cite{2021THE, 2019The,2018Driven, 2017On,2016Analysis,2014Analysis} studied  the blockage effects of mmWave signals in  terrestrial communications.  
 For mmWave UAV communications,  the static blockage due to buildings has been studied in \cite{2022Geometric,2021Line,2014Optimal}, and the LoS probability is obtained to guide the deployment of UAVs. 
In \cite{2020Clustered, guo2022coverage, shi2022modeling}, the coverage performance  was analyzed under static blockage.
Furthermore, the authors in \cite{2018Flexible} contributed an  analytical framework to characterize mmWave backhaul links by considering dynamic   blockage, and it is  shown that  the assistance of UAVs can significantly improve the performance of mmWave backhaul links. 
The authors  in \cite{2018Effects} presented a method of optimal deployment, which provided the optimal 3D position and coverage radius of a    UAV  by considering human  blockage. 
More recently, in \cite{2022Characterization}, the authors  experimentally verified the impact of  human body  on  air-to-ground  links, and the results showed that the  effect  depended largely on the hovering UAVs' positions.
 However, the above results 
  are obtained under the supposition  that the  position of the UAV  is perfectly stable. This assumption is too ideal because, in practice, the fluctuation of UAVs is unavoidable.
 
As far as we know, there are few studies on  mmWave blockage characteristics under the effect of UAV position fluctuations. 
In \cite{2020Robust}, the authors studied the blockage characteristics and beam misalignment considering the hovering fluctuation effect of UAV, and they proposed a UAV deployment and beamforming optimization method to establish robust transmission. However, their theoretical model only analyzed the blockage probability considering UAV position fluctuations. 
But, more importantly,  further system performance analysis still needs improvement. Especially the system reliability analysis under the joint consideration of fluctuations and blockages has yet to be carried out.
In addition, in our previous work\cite{2022Effects}, we developed a theoretical analysis model to study the impact of UAV position fluctuations on UAV-user link blockages and the reliability of air-to-ground mmWave UAV communications. 
However, the results of  \cite{2022Effects} are  obtained based on a simple model that only considered the height fluctuations of UAVs.

Although the above work provides valuable insights for mmWave UAV communications, more research still needs to be done on the effect of  UAV 3D position fluctuations on the system's QoS.
A comprehensive analytical model that builds the relationship between the hovering UAVs' position fluctuations and the mmWave link's blockages is greatly requested
 to benefit from mmWave UAV communications.
This is what we are trying to do through this work. 
The major contributions of this paper are given as follows:

$\bullet$ We propose a novel  model to analyze the effect of UAV 3D position fluctuations on the QoS of
air-to-ground mmWave UAV  systems.
Different from existing works, this work establishes the theoretical relationship between the  position fluctuations of UAVs   and  the blockages of 
mmWave links
(including static, dynamic, and self-blockages), which enables the evaluation of the blockage characteristics of UAV-user links under  position fluctuations.
Especially,  we find that the dynamic blockage   is obviously affected by the random fluctuations of UAV position, and its fluctuation variance is directly 
 proportional to the  variance of  UAV position fluctuations.

$\bullet$ We derive the  closed-form expressions for  reliable service probability  respective to  the probability of UAV-user links blockages,
and coverage probability respective to the signal-to-noise ratio (SNR), respectively. 
The results indicate that the greater the position fluctuation  strength of the UAV is, the lower the reliable service probability and coverage probability is. Therefore,  UAV
position fluctuations considerably influence these two QoS metrics of the considered system. To our best knowledge, this is the first work that analytically evaluates the impact of UAV 3D position fluctuations on the QoS of air-to-ground mmWave UAV communication
systems while considering the static, dynamic, and self-blockage of mmWave links.

$\bullet$ We provide Monte Carlo simulations to verify the  theoretical results, and the simulation results also show that there exists an  optimal horizontal position and height of UAVs  to maximize the reliable service probability and   coverage probability, respectively, which  help  establish  reliable air-to-ground mmWave UAV communications.

The rest of this paper is organized as follows: In Section II,  we present
the system model. In Section III, we analyze the effect of UAV position fluctuation on link blockages.
 In Section IV,
the closed-form expression of reliable service probability
  is derived, while in Section V,  the closed-form expression of coverage probability is derived. Section VI provides the
 simulation results, and Section VII concludes  this paper.  Notations used in this paper are given in Table I.

\section{System Model}
We consider a mmWave UAV communications scenario, as shown in Fig. \ref{fig:system}, where ground users are randomly distributed  and served by  UAVs as aerial BSs. Since obstacles easily block UAV-user links, we assume the user can immediately link  to another unblocked link once the current UAV-user link is blocked.
 Without losing generality, we randomly select a user as the typical user.  The typical user's location is denoted as $(x, y, z) \!=\! (0, 0, h_R)$, where $h_R$ is the height of the typical user.
The specific  model settings are given as follows:

\begin{table} \label{notation}
\renewcommand\arraystretch{1}
\caption{\rule{0pt}{12pt}Description of Notation}
\begin{center}
\begin{tabular}{|l|l|}  
\hline \textbf{\;\;\;\;Notation}               &\textbf{\;\;\;\;\;\;\;\;\;\;\;\;\;\;\;\;\;\;\;\;Description}   \\
\hline   $(x_i, y_i, h_i)$ & Real-time position  of the $i$-th UAV under  fluctuations. \\
 $(\mu_{x_i}, \mu_{y_i}, \mu_{h_i})$  & Mean of the  real-time position  $(x_i, y_i, h_i)$.\\
$\lambda_T$/$\lambda_B$/$\lambda_S$  &Density of UAVs/human blockers/buildings.\\
  $v$         &Velocity of moving human blockers. \\
  $r_i$  &2D distance from the $i$-th UAV to the user.\\
$B_i^{\rm{d}}$/$B_i^{\rm{s}}$            &Indicator for dynamic/static blockage   for the $i$-th  link.  \\
  $B^{\rm{sel}}$        &Indicator for self-blockage  for a single  UAV-user link.\\

$C_i$           &Indicator for the $i$-th UAV is available.\\
 
 $M$/$N$  &Number of   all/available UAVs.\\


 \hline
\end{tabular}\label{table:SFDs}
\end{center}
\end{table}

\begin{figure}
\centering
  \includegraphics[scale=0.4]{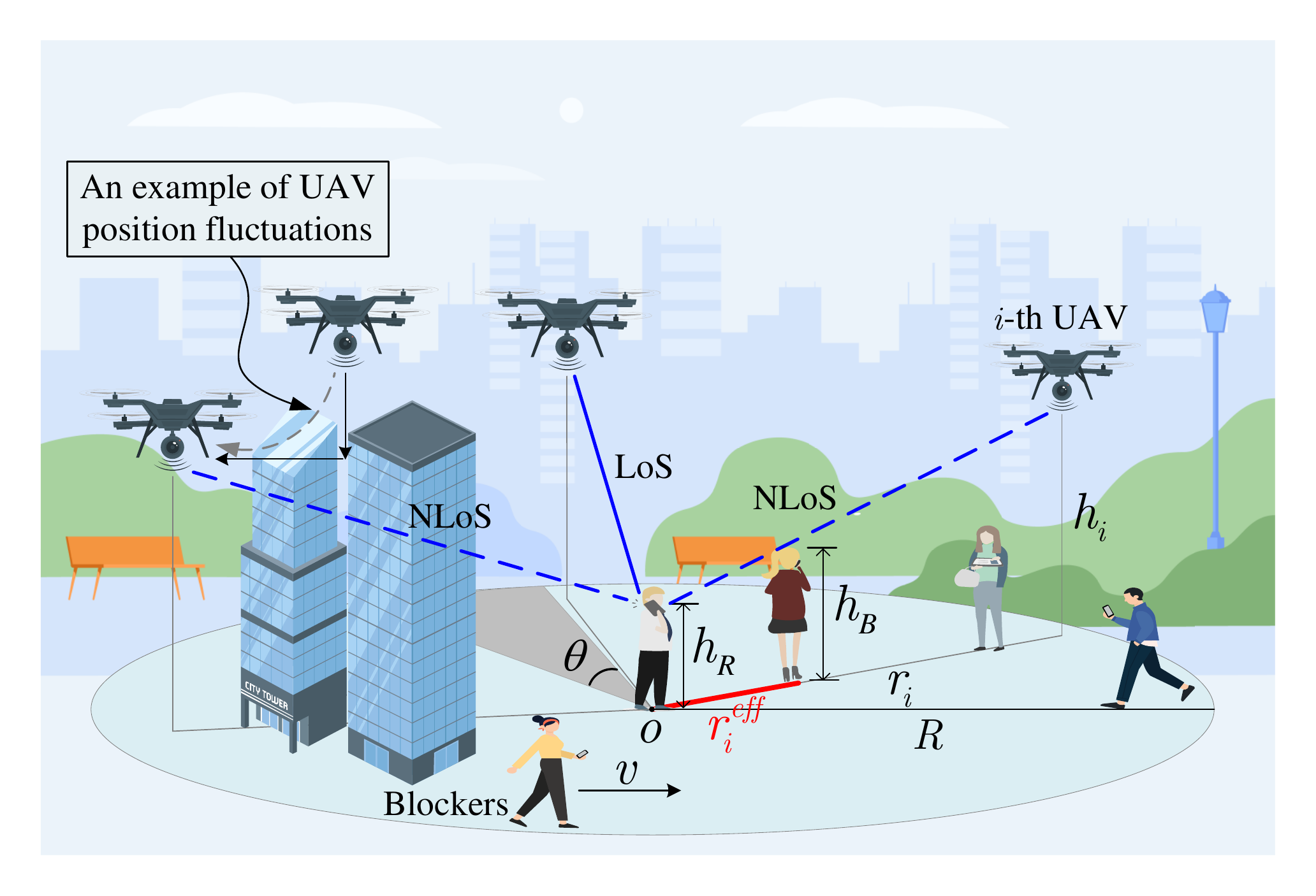} \\  
  \caption{The considered mmWave UAV communications, where the position of  the hovering UAVs will fluctuate randomly,
the QoS provided to the user will be affected by various blockages and fluctuations.
}\label{fig:system}
\end{figure}

\subsection{UAV Models}  \label{UAVs}
This  subsection models the position and position fluctuations of   UAVs. Since   hovering UAVs will fluctuate randomly, the real-time position of each UAV  fluctuates randomly around its central position \cite{2020Statistical,2020Robust}. We denote $(\mu_{x_i}, \mu_{y_i}, \mu_{h_i})$
and $({x_i}, {y_i}, {h_i})$ as the central position    and the  real-time position for the 
$i$-th UAV, respectively.  The central positions 
on the two-dimensional (2D) plane is modeled as a homogeneous Poisson point process (PPP) with   density $\lambda_T$ \cite{2019The}. We assume that 
the maximum service radius of the UAV is $R$. Therefore, the number of all potential serving UAVs $M$ for the typical user is Poisson distributed 
i.e., $ \mathbb{P}_M(m)=\frac{{[ \lambda_T\pi{R}^{2}]}^{m}}{m!}\exp\left({\!-\!\lambda_T\pi{R}^{2}}\right)$.
 Then, due to
 the  real-time position of the hovering  UAV fluctuates randomly,
  and numerous literature  has modeled  UAV  position  fluctuations as  Gaussian distributions \cite{2021Hovering,20213DChannel,2020Statistical,2018Channel},
 the exact position of the $i$-th UAV 
 can be modeled as  $P_i\sim N (\mu_{P_i}, \sigma^2), P\in\{x,y,h\} $, 
 where 
 $\sigma^2$ is the variance of the UAV position fluctuations. 
It is clear that the greater the  $\sigma$ is, 
the greater the   fluctuation  is.
 Therefore, we regard $\sigma$ as a measure of  UAV position fluctuation strength.

\subsection{Blockage Models}  \label{Blockage Models}
\subsubsection{Dynamic Blockage Model}
Dynamic blockage due to moving humans has been  thoroughly studied in mmWave systems
with traditional BSs \cite{2021THE,2019The,2018Driven}.
Following \cite{2019The}, we assume that  humans are located in the  region  according to a homogeneous PPP of  density $\lambda_B$, and they  move randomly at  velocity $v$.  
 We define effective blockage length to reflect whether
blockage occurs when a human blocker passes through a UAV-user link.
The  effective blockage length $r_i^{\mathrm{eff}}$ for the $i$-th UAV-user  link  is  $ r_i^{\mathrm{eff}}\!=\!\frac{h_B-h_R}{h_i-h_R}r_i$,
where $h_B$ is the height of the human  blocker,  $r_i\!=\!\sqrt{x_i^2+y_i^2}$. 
Then, the dynamic blockage of the $i$-th   link is modeled as  an exponential on-off process with blocking and non-blocking rates of $\eta_i$ and $\omega$, respectively \cite{2019The}, where  
$ \eta_i=\frac{2}{\pi}\lambda_B v r_i^{\mathrm{eff}}$ and $\omega$ is simplified to  a constant.  Moreover,  we know that the $i$-th link is dynamically blocked with probability (w.p.) $\phi_i(h_i,r_i)=\frac{\eta_i}{\eta_i+\omega}$. We define a random variable $B_i^{\rm{d}}\!=\!\{1,0\}$ to indicate whether  dynamic blockage occurs on the $i$-th UAV-user link,  and $B_i^{\rm{d}}$ can be represented as
\begin{align}  \label{eq:P_B''}
   B_i^{\rm{d}}\!=\!\left\{
    \begin{array}{l l l}
     1
  &{\text{w.p.}}  &\phi_i(h_i,r_i)=\frac{\rho{r_i}}{{\rho r_i+\omega(h_i-h_R)}}, \\ 
       0
   &{\text{w.p.}}     &\tilde{\phi_i}(h_i,r_i)=1-\phi_i(h_i,r_i), \\  
  \end{array} \right.
\end{align}
where $\rho={2}{\lambda_B}v({h_B-h_R})/{\pi}$ is used for convenience.

\subsubsection{Static Blockage Model}
Static blockage caused  by static obstacles (e.g., high-rise buildings) has been well studied in  \cite{2021Line,2020Clustered,2014Analysis} based on random shape theory. 
We consider the blockage model given
in \cite{2014Analysis} to model the static blockage, and the $i$-th link is blocked with probability $\psi_i(h_i,r_i)$,  where $\psi_i(h_i,r_i)$
 will be given later. 
 Similar to dynamic blockage, 
we define $B_i^{\rm{s}}\!=\!\{1,0\}$ to describe whether  static blockage occurs on the $i$-th  link,  and $B_i^{\rm{s}}$ can be represented as
\begin{align}  \label{eq:static1}
   B_i^{\rm{s}}\!=\!\left\{
    \begin{array}{l l l}
     1
  &{\text{w.p.}}  &\psi_i(h_i,r_i)=1-{\exp{\left(-(\epsilon r_i+\epsilon_0)\right)}}, \\ 
       0
   &{\text{w.p.}}     &\tilde{\psi_i}(h_i,r_i)=1-\psi_i(h_i,r_i), \\  
  \end{array} \right.
\end{align}
where  $\epsilon_0=\lambda_S\mathbb{E}(l)\mathbb{E}(w)$ and $\epsilon=\frac{2} {\pi}\lambda_S(\mathbb{E}(l)+\mathbb{E}(w))$,   $\lambda_S$, $\mathbb{E}(l)$, and $\mathbb{E}(w)$ are the density, expected length, and width of the buildings, respectively.  

\subsubsection{Self-Blockage Model}
Except for  dynamic and static blockages, a fraction  of   links will also be blocked by the user's own body.  
Following \cite{2019The}, we define a sector of angle $\theta$  behind the  user as the self-blockage zone.
The user himself will  block  all  UAV-user links in this zone. Therefore, the self-blockage
probability
 of a UAV-user link is  the probability that the UAV is located in the self-blockage zone.
We define an indicative random variable $B^{\rm{sel}}\!=\!\{1,0\}$ to describe whether  
self-blockage occurs on a single UAV-user link,  and $B^{\rm{sel}}$ can be represented as
\begin{align}  \label{eq:self}
   B^{\rm{sel}}\!=\!\left\{
    \begin{array}{l l l}
     1
  &{\text{w.p.}}  &\frac{\theta}{2\pi}, \\ 
       0
   &{\text{w.p.}}     &1-\frac{\theta}{2\pi}. \\  
  \end{array} \right.
\end{align}

\section{Effect of Fluctuations  on Blockages}
This section focuses on obtaining 
 the relationship between UAV position fluctuations and UAV-user link blockages.
Specifically, according to \eqref{eq:P_B''} and \eqref{eq:static1},  we observe that    the probability of dynamic and static blockages of UAV-user link is related to the position of the UAV (i.e., $h_i$ and $r_i$).  Due to random fluctuations in the UAV's position,  $\phi_i(h_i,r_i)$ and $\psi_i(h_i,r_i)$
 are  randomly changing  and  only reflect an instantaneous moment's blockage characteristic. 
On the one hand, the random variation in the blockage state is detrimental to users,
as it will lead to unreliable communications.
On the other hand, analyzing the performance of a  system at a specific moment is difficult and of limited value.  Hence, to evaluate the average performance of the   system, it is important to  obtain the distribution of  $\phi_i(h_i,r_i)$ and $\psi_i(h_i,r_i)$ under UAV position fluctuations. 
Accordingly, in the following, we derive the relevant probability density function (PDF).

 \subsection{Effect of Fluctuations  on Dynamic Blockage}
\begin{Theorem} \label{Theorem1}
The PDF of   $\phi_i(h_i,r_i)$ can be well approximately modeled as 
\begin{align}  \label{eq:PDF_PB}
    f_{\phi_i}(\phi_i)\approx\frac{1}{\sqrt{2\pi}\sigma_{\phi_i}}{\exp\left({-{\frac{{(\phi_i-\mu_{\phi_i})}^{2}}{2{\sigma_{\phi_i}^2}}}}\right)},
\end{align}
where the mean $\mu_{\phi_i}$ and variance $\sigma^2_{\phi_i}$
are given in  \eqref{eq:P_B} and \eqref{eq:P(B|C)}, respectively. Note that the notation ${\phi_i}$ is a compressed version of $\phi_i(h_i,r_i)$, which we use for convenience.
\end{Theorem}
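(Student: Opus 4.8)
The plan is to treat $\phi_i$ as a smooth deterministic function of the underlying Gaussian coordinates and to show that, in the physically relevant small-fluctuation (hovering) regime, it is well approximated by a linear image of those coordinates, which is necessarily Gaussian. First I would record the exact laws of the inputs: by the model of Section~\ref{UAVs}, $h_i\sim N(\mu_{h_i},\sigma^2)$ exactly, while $x_i\sim N(\mu_{x_i},\sigma^2)$ and $y_i\sim N(\mu_{y_i},\sigma^2)$ are independent, so $r_i=\sqrt{x_i^2+y_i^2}$ is Rician. Crucially, since $h_i$ is independent of $(x_i,y_i)$, the pair $(r_i,h_i)$ is independent; this decoupling is what will later split the variance into two additive contributions.

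The core step is linearization. Writing $\phi_i=G(x_i,y_i,h_i)$ through $\phi_i=\tfrac{\rho r_i}{\rho r_i+\omega(h_i-h_R)}$ and expanding $G$ to first order about the central point $(\mu_{x_i},\mu_{y_i},\mu_{h_i})$ gives $\phi_i\approx G(\mu_{x_i},\mu_{y_i},\mu_{h_i})+\nabla G\cdot(\text{centered Gaussian vector})$. Because any linear combination of jointly Gaussian variables is again Gaussian, this immediately produces the claimed form \eqref{eq:PDF_PB}. The zeroth-order term yields the mean $\mu_{\phi_i}=\tfrac{\rho\mu_{r_i}}{\rho\mu_{r_i}+\omega(\mu_{h_i}-h_R)}$ with $\mu_{r_i}=\sqrt{\mu_{x_i}^2+\mu_{y_i}^2}$, matching \eqref{eq:P_B}. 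For the variance I would use the chain rule $\partial r_i/\partial x_i|_{\mu}=\mu_{x_i}/\mu_{r_i}$ and its $y$-analogue, whose squares sum to one; combined with $\partial\phi_i/\partial r_i$ and $\partial\phi_i/\partial h_i$ the cross terms cancel and the expression collapses to $\sigma_{\phi_i}^2=\sigma^2\big[(\partial\phi_i/\partial r_i)^2+(\partial\phi_i/\partial h_i)^2\big]$ evaluated at the means, which is the content of \eqref{eq:P(B|C)}. The same delta-method reasoning applied to $r_i$ alone shows $\mathrm{Var}(r_i)\approx\sigma^2$, consistent with the large-Rician-factor Gaussian approximation of $r_i$, so the two viewpoints agree.

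The main obstacle is not the differentiation, which is routine, but justifying the linearization itself, since this is where the word \emph{approximately} in the statement enters. The first-order expansion is accurate only when the fluctuation standard deviation $\sigma$ is small relative to the central horizontal distance $\mu_{r_i}$ and the central height offset $\mu_{h_i}-h_R$, i.e. when the UAV remains in a neighbourhood on which the bounded ratio $\phi_i$ is nearly affine in $(r_i,h_i)$; this is precisely the hovering regime posited in Section~\ref{UAVs}. I would also flag the boundary-support mismatch: $\phi_i\in[0,1]$ while a Gaussian has unbounded support, so the approximation is trustworthy exactly when $\mu_{\phi_i}$ sits away from $0$ and $1$ and $\sigma_{\phi_i}$ is small. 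The residual accuracy claim is then best confirmed quantitatively against the Monte-Carlo histograms in Section VI rather than by a rigorous error bound.
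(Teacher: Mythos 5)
Your route is sound for establishing the Gaussian shape and the mean, but it is genuinely different from the paper's, and the variance you obtain is not literally \eqref{eq:P(B|C)}. The paper proceeds in two steps: it first replaces the Rician $r_i$ by a Gaussian with mean $\mu_{r_i}$ and variance $\sigma^2$ (valid since $\mu_{r_i}\gg\sigma$, their Eq. \eqref{eq:Gaussian_r}), and then invokes the known normal approximation for a ratio of Gaussian random variables (Hinkley; D\'iaz-Franc\'es--Rubio) applied to $X=\rho r_i$ and $Y=\rho r_i+\omega(h_i-h_R)$, taking $\sigma_{\phi_i}^2=\sigma_X^2/\mu_Y^2+\mu_X^2\sigma_Y^2/\mu_Y^4$, i.e.\ the formula for an \emph{independent} numerator and denominator. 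You instead linearize $\phi_i$ directly in $(x_i,y_i,h_i)$; your observation that $(\partial r_i/\partial x_i)^2+(\partial r_i/\partial y_i)^2=1$ at the mean is correct and your gradient computation gives $\sigma_{\phi_i}^2\approx\sigma^2\rho^2\omega^2\bigl[(\mu_{h_i}-h_R)^2+\mu_{r_i}^2\bigr]/D^4$ with $D=\rho\mu_{r_i}+\omega(\mu_{h_i}-h_R)$. This is exactly the delta-method variance for the \emph{correlated} ratio (it equals $\sigma_X^2/\mu_Y^2-2\mu_X\mathrm{Cov}(X,Y)/\mu_Y^3+\mu_X^2\sigma_Y^2/\mu_Y^4$ with $\mathrm{Cov}(X,Y)=\rho^2\sigma^2$), whereas the paper's \eqref{eq:P(B|C)} omits the covariance term because $X$ and $Y$ share the summand $\rho r_i$ and are therefore not independent. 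The two expressions differ by $2\rho^3\mu_{r_i}\sigma^2/D^3$, i.e.\ by a relative amount $2\mu_{\phi_i}$; in the paper's parameter regime $\rho\mu_{r_i}\ll\omega(\mu_{h_i}-h_R)$ this is a sub-percent discrepancy, and if anything your expression is the more accurate first-order variance. So your claim that your formula ``is the content of \eqref{eq:P(B|C)}'' is not literally true and should be replaced by a statement that it agrees with \eqref{eq:P(B|C)} up to the neglected correlation term. Your discussion of when the linearization is trustworthy (small $\sigma$ relative to $\mu_{r_i}$ and $\mu_{h_i}-h_R$, and $\mu_{\phi_i}$ away from the boundary of $[0,1]$) is a useful addition that the paper handles only implicitly through its $\mu_{r_i}\gg\sigma$ assumption and the cited ratio-distribution results.
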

\begin{proof}
The proof is given in Appendix A.
\end{proof}

\textit{ Remark 1: From Theorem \ref{Theorem1}, \eqref{eq:P_B} and \eqref{eq:P(B|C)},
we  can observe that the mean $\mu_{\phi_i}$ of $\phi_i(h_i,r_i)$ is independent of the variance $\sigma^2$ of the UAV position fluctuations. However, the variance  $\sigma^2_{\phi_i}$ of $\phi_i(h_i,r_i)$ is an increasing function of $\sigma^2$.
Therefore, the higher the  fluctuation strength of the UAV, the larger the deviation of  $\phi_i(h_i,r_i)$ from  its mean $\mu_{\phi_i}$.}

Then, 
 the relationship between $\mu_{\phi_i}$,   $\sigma_{\phi_i}$ and  average 2D distance $\mu_{r_i}$ 
   is given as follows:
 \begin{Corollary}  \label{Corollary4}
$\mu_{\phi_i}$ and  $\sigma_{\phi_i}$ are increasing and decreasing functions of $\mu_{r_i}$, respectively, $i\!=\!1,\cdots,m$. 
Therefore, 
 the maximum values of $\mu_{\phi_i}$ can be expressed as
\begin{align}  \label{eq:maxP}   \mu^{\max}_{\phi_i}=\frac{\rho{\mu^{\max}_{r_i}}}{{\rho
    {\mu^{\max}_{r_i}}+\omega(\mu_{h_i}\!-\!h_R)}}  
\end{align}
where $\mu^{\max}_{r_i}$  
denotes the maximum  value of $\mu_{r_i}$,   $i=1,\cdots,m$. Similarly, we can obtain the maximum  value of $\sigma_{\phi_i}$  by replacing $\mu_{r_i}$ in \eqref{eq:P(B|C)} with $\mu^{\min}_{r_i}$.
\end{Corollary}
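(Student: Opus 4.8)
The plan is to treat the closed-form expressions for $\mu_{\phi_i}$ and $\sigma_{\phi_i}$ established in \eqref{eq:P_B} and \eqref{eq:P(B|C)} as single-variable functions of the average 2D distance $\mu_{r_i}$, holding the remaining quantities ($\rho$, $\omega$, $\mu_{h_i}$, $h_R$ and the fluctuation variance $\sigma^2$) fixed, and then to settle the two monotonicity claims by a sign analysis of their derivatives with respect to $\mu_{r_i}$. The facts that pin down the signs are the physical constraints $\rho>0$, $\omega>0$ and $\mu_{h_i}>h_R$ (the UAV hovers strictly above the user, and $\rho=2\lambda_B v(h_B-h_R)/\pi>0$ since $h_B>h_R$), so no new assumptions are needed.

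First I would handle the mean. From the structure behind \eqref{eq:maxP}, $\mu_{\phi_i}$ is the rational function $\frac{\rho\mu_{r_i}}{\rho\mu_{r_i}+\omega(\mu_{h_i}-h_R)}$. Differentiating in $\mu_{r_i}$, the quotient rule collapses the numerator to $\rho\omega(\mu_{h_i}-h_R)$ over the squared denominator, which is strictly positive under the constraints above. Hence $\mu_{\phi_i}$ is strictly increasing in $\mu_{r_i}$, and being monotone it attains its maximum at the largest admissible mean distance $\mu^{\max}_{r_i}$, giving exactly \eqref{eq:maxP}.

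Next I would run the analogous computation on the expression in \eqref{eq:P(B|C)}. Through the first-order moment (delta-method) expansion underlying Theorem \ref{Theorem1}, $\sigma_{\phi_i}$ carries a factor of $[\rho\mu_{r_i}+\omega(\mu_{h_i}-h_R)]^{-2}$ from the partial derivatives of $\phi_i$ evaluated at the mean point. As $\mu_{r_i}$ grows this denominator term drives $\sigma_{\phi_i}$ down, so I would show its derivative is negative, conclude that $\sigma_{\phi_i}$ is decreasing, and invoke the same monotonicity argument to place its maximum at the smallest admissible distance $\mu^{\min}_{r_i}$.

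The delicate step, and the one I expect to be the main obstacle, is the variance rather than the mean. Unlike $\mu_{\phi_i}$, the expression for $\sigma^2_{\phi_i}$ couples two competing dependencies on $\mu_{r_i}$: the rapidly decaying $[\rho\mu_{r_i}+\omega(\mu_{h_i}-h_R)]^{-4}$ prefactor pushes it down, whereas the numerator grows with $\mu_{r_i}$ (it contains a $\mu_{r_i}^2$ contribution from the height-fluctuation term, and possibly a weakly $\mu_{r_i}$-dependent distance variance $\sigma_{r_i}^2$). Showing that the decay dominates over the entire admissible range amounts to reducing $\frac{d}{d\mu_{r_i}}\sigma^2_{\phi_i}<0$ to a single quadratic inequality in $\mu_{r_i}$ and verifying that its discriminant stays non-positive in the operating regime (so the quadratic keeps a fixed sign); this is where I would spend the care, treating $\sigma_{r_i}^2$ as essentially constant via the near-Gaussian approximation of the Rician distance when the center offset is not too small.
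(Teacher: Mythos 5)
Your handling of the mean is precisely the paper's argument: differentiating \eqref{eq:P_B} gives $\partial\mu_{\phi_i}/\partial\mu_{r_i}=\rho\omega(\mu_{h_i}-h_R)/\bigl(\rho\mu_{r_i}+\omega(\mu_{h_i}-h_R)\bigr)^2>0$ under $\mu_{h_i}>h_R$, and monotonicity places the maximum at $\mu^{\max}_{r_i}$. That half is fine.

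The gap is in the variance half, exactly where you predicted the difficulty would be, but your plan does not close it. The paper splits $\sigma^2_{\phi_i}$ in \eqref{eq:P(B|C)} into its two summands and differentiates each separately; the first summand is unconditionally decreasing, but the derivative of the second summand is proportional to $\mu_{r_i}\bigl(\rho\mu_{r_i}-\omega(\mu_{h_i}-h_R)\bigr)$ up to positive factors, so its sign is \emph{not} determined by $\rho>0$, $\omega>0$, $\mu_{h_i}>h_R$ alone. The paper settles that sign only by importing the concrete operating regime of Table II ($v(h_B-h_R)=0.4$, hence $\rho\le 0.025$ and $\rho\mu_{r_i}\le\rho R=2.5$, while $\omega(\mu_{h_i}-h_R)$ is on the order of $50$). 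Your assertion that ``no new assumptions are needed'' is therefore wrong, and your discriminant strategy --- showing the derivative keeps a fixed sign for all admissible $\mu_{r_i}$ --- cannot succeed as a parameter-free statement, because the second summand $(\rho\mu_{r_i})^2(\rho^2+\omega^2)\sigma^2/\bigl(\rho\mu_{r_i}+\omega(\mu_{h_i}-h_R)\bigr)^4$ genuinely changes monotonicity at $\rho\mu_{r_i}=\omega(\mu_{h_i}-h_R)$. A further caution if you carry out the computation: the derivative of that second summand works out to $2\rho^2\mu_{r_i}(\rho^2+\omega^2)\sigma^2\bigl(\omega(\mu_{h_i}-h_R)-\rho\mu_{r_i}\bigr)/\bigl(\rho\mu_{r_i}+\omega(\mu_{h_i}-h_R)\bigr)^5$, which is \emph{positive} precisely in the regime $\rho\mu_{r_i}<\omega(\mu_{h_i}-h_R)$ that the paper's numerics establish (note this sign differs from the one asserted in \eqref{eq:second term}); so the two summands' derivatives compete, and a correct proof of the claimed decrease must bound the increasing contribution by the decreasing one rather than argue term by term. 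Neither your proposal nor a term-by-term sign check supplies that estimate.
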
  
\begin{proof}
The proof is given in Appendix B.
\end{proof}
\textit{ Remark 2: According to Corollary \ref{Corollary4}, we  can conclude that  
 if the  UAV is deployed closer to the user, i.e., $\mu_{r_i}$ is smaller,
 the mean $\mu_{\phi_i}$ of  $\phi_i(h_i,r_i)$  will be smaller.
However, in this case, the  variance $\sigma_{\phi_i}^2$ of  $\phi_i(h_i,r_i)$  will become larger. 
Therefore, there may be an appropriate $\mu_{r_i}$ to obtain a trade-off  between  $\mu_{r_i}$ and $\sigma_{\phi_i}^2$ so that the  system can achieve a better QoS.}

 \subsection{Effect of Fluctuations  on Static Blockage}
\begin{Corollary}  \label{Corollary_s}
The PDF of   $\psi_i(h_i,r_i)$ can be well approximately modeled as 
\begin{align}  \label{eq:log-normally}
  f_{\psi_i}(\psi_i)\approx \frac{1}{\sqrt{2\pi}{(1\!-\!\psi_i)}\sigma_{\hat{r_i}}}{\exp\left(-{{\frac{({\ln\left({1-\psi_i}\right)-\mu_{\hat{r_i}})}^{2}}{2{\sigma_{\hat{r_i}}^2}}}}\right)}, 
\end{align}
where   $\sigma_{\hat{r_i}}$ and $\mu_{\hat{r_i}}$ can be seen in the proof, and ${\psi_i}$ is a compressed version of $\psi_i(h_i,r_i)$.

\end{Corollary}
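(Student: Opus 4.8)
The plan is to exploit the fact that, unlike the dynamic blockage probability, the static blockage probability $\psi_i(h_i,r_i)=1-\exp(-(\epsilon r_i+\epsilon_0))$ depends on the UAV position only through the two-dimensional distance $r_i=\sqrt{x_i^2+y_i^2}$ and is a strictly monotone, invertible function of $r_i$. This lets me push the randomness of $r_i$ through an explicit change of variables, rather than resorting to the moment-matching argument needed in Theorem \ref{Theorem1}.

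First I would obtain (or import from the proof of Theorem \ref{Theorem1} in Appendix A) a Gaussian approximation for $r_i$. Since $x_i\sim N(\mu_{x_i},\sigma^2)$ and $y_i\sim N(\mu_{y_i},\sigma^2)$ are independent with common variance, $r_i$ is Rician distributed with non-centrality parameter $\mu_{r_i}=\sqrt{\mu_{x_i}^2+\mu_{y_i}^2}$; in the regime $\mu_{r_i}\gg\sigma$ (large Rician factor, i.e.\ the UAV's mean horizontal offset dominates the jitter) the Rician law is well approximated by $r_i\approx N(\mu_{r_i},\sigma_{r_i}^2)$ with $\sigma_{r_i}^2$ close to $\sigma^2$. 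This is the approximation underlying the ``well approximately'' qualifier in the statement.

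Next I would observe that $\ln(1-\psi_i)=-(\epsilon r_i+\epsilon_0)$ is an affine, strictly decreasing function of $r_i$, so under the Gaussian approximation it is itself Gaussian. Writing $\hat r_i:=\ln(1-\psi_i)$, its mean and variance follow by linear propagation, $\mu_{\hat r_i}=-(\epsilon\mu_{r_i}+\epsilon_0)$ and $\sigma_{\hat r_i}^2=\epsilon^2\sigma_{r_i}^2$, which are the quantities referenced in the corollary. I would then apply the standard transformation-of-variables formula: with $\psi_i=1-e^{\hat r_i}$ the Jacobian is $\bigl|\mathrm{d}\hat r_i/\mathrm{d}\psi_i\bigr|=1/(1-\psi_i)$, so $1-\psi_i$ is log-normal and
\[
f_{\psi_i}(\psi_i)=f_{\hat r_i}\!\bigl(\ln(1-\psi_i)\bigr)\,\frac{1}{1-\psi_i},
\]
which is precisely \eqref{eq:log-normally}.

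The main obstacle is not the final change of variables, which is routine, but justifying and quantifying the Rician-to-Gaussian step. Near the degenerate case $\mu_{r_i}\to 0$ (UAV nearly overhead) the Rician factor is small, $r_i$ is far from Gaussian, and the approximation---hence the log-normal fit---deteriorates. I would therefore present the result as an approximation valid for $\mu_{r_i}/\sigma$ not too small, and if a sharper error statement were wanted I would compare the first two moments of the true $\psi_i$ law against those of the proposed log-normal, flagging that the domain mismatch ($r_i\ge 0$ versus a Gaussian supported on all of $\mathbb{R}$) is the dominant source of error.
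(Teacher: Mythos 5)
Your proposal is correct and follows essentially the same route as the paper: approximate $r_i$ as Gaussian, note that $-(\epsilon r_i+\epsilon_0)$ is then Gaussian with mean $-(\epsilon\mu_{r_i}+\epsilon_0)$ and variance $\epsilon^2\sigma^2$, conclude that $1-\psi_i$ is log-normal, and transfer the density to $\psi_i$ (the paper does this via the CDF identity $F_{\psi_i}(\psi_i)=1-F_{\tilde{\psi_i}}(1-\psi_i)$ rather than your Jacobian, but these are the same computation). Your added caveat about the Rician-to-Gaussian step degrading when $\mu_{r_i}/\sigma$ is small is a reasonable refinement but not a departure from the paper's argument.
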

\begin{proof}
We first derive the PDF of $\tilde{\psi_i}(h_i,r_i)={\exp{\left(-(\epsilon r_i+\epsilon_0)\right)}}$.
According to  \eqref{eq:Gaussian_r}, $r_i$ can be approximately modeled as a normal distribution  with mean $\mu_{r_i}$ and
variance $ \sigma^2$. As a result, we have
$-(\epsilon r_i+\epsilon_0)$  is  normally distributed  with mean
$\mu_{\hat{r_i}}\!=\!-(\epsilon \mu_{r_i}+\epsilon_0)$ and variance  ${\sigma_{\hat{r_i}}^2}\!=\!\epsilon^2\sigma^2$. Therefore,   $\tilde{\psi_i}(h_i,r_i)$  is  log-normally distributed \cite{1993Information}, and its  PDF is given by
$f_{{\tilde{\psi_i}}}(\tilde{\psi_i})\approx \frac{1}{\sqrt{2\pi}{\tilde{\psi_i}}\sigma_{\hat{r_i}}}{\exp\left({-{\frac{({\ln\left({\tilde{\psi_i}}\right)-\mu_{\hat{r_i}})}^{2}}{2{\sigma_{\hat{r_i}}^2}}}}\right)}$.
We further denote the cumulative distribution function (CDF) of $\tilde{\psi_i}(h_i,r_i)$  
and ${\psi_i(h_i,r_i)}$ as $F_{\tilde{\psi_i}}(\tilde{\psi_i})$
and $F_{\psi_i}({\psi_i})$, respectively. It is easy to get 
$F_{\psi_i}({\psi_i})\!=\!\mathbb{P}(1\!-\!{\exp{\left(-(\epsilon r_i+\epsilon_0)\right)}}\le{\psi_i)}=1-F_{\tilde{\psi_i}}(1-{\psi_i})$, and 
$f_{{{\psi_i}}}({\psi_i})=f_{{\tilde{\psi_i}}}(1-{\psi_i})$ by
taking the derivative of  $F_{{{\psi_i}}}({\psi_i})$.
Finally,
 the PDF of ${\psi_i(h_i,r_i)}$ is obtained  as shown in \eqref{eq:log-normally}.
\end{proof}
Using Corollary \ref{Corollary_s}, we can obtain the 
 mean 
 and variance  of $\tilde{\psi_i}(h_i,r_i)$ separately as
 \begin{align} 
    \mu_{\tilde{\psi_i}}&={\exp{\left(\mu_{\hat{r_i}}+{\sigma_{\hat{r_i}}^2}/2\right)}}, \label{eq:log-normally1} \\
    \sigma_{\tilde{\psi_i}}^2&={\exp{\left(2(\mu_{\hat{r_i}}+\sigma_{\hat{r_i}}^2)\right)}}-{\exp{\left(2(\mu_{\hat{r_i}}+\sigma_{\hat{r_i}}^2/2)\right)}} \label{eq:log-normally2}
\end{align}
\textit{Remark 3: From \eqref{eq:log-normally1} and \eqref{eq:log-normally2}, we can infer that the mean and variance of  $\psi_i(h_i,r_i)$  may also be affected by the position fluctuation of UAV. However, in our system assumption, we can obtain:
$\mu_{\tilde{\psi_i}}\!\approx\!\exp{\left(-(\epsilon \mu_{r_i}\!+\!\epsilon_0\right))}$ and $\sigma_{\tilde{\psi_i}}^2\approx0$, where the detailed proof can be seen in Appendix \ref{Appendix C}. 
The results  indicate that the 
$\psi_i(h_i,r_i)$ hardly changes with  the position fluctuation of the UAV, which matches the intuition that the dimension of the position fluctuation of the UAV is very small compared to the size of  buildings.
 Therefore, the impact of UAV position fluctuations on the static blockage probability of the link is negligible, and only when the UAV is located at the edge of the building will position  fluctuations of the UAV affect the link' static blockage state.}


\section{Effect of Fluctuations on Reliable Service}  \label{Blockage}
 This section analyzes the effect  of UAV position fluctuations on the reliable service probability. 
 Since static  and self-blockage will lead to permanent blockage,  a UAV is available if  UAV-user link  is not blocked by   static blockage and self-blockage.  
We are more concerned  that at least one UAV is available.   Due to  blockages  affecting the system reliability, we propose a new QoS metric in terms of links blockages,  called reliable service probability, given by Definition 1.
 \begin{Definition}
A user is said to be in reliable service  if  at least one UAV is available with a blockage probability  not higher than a predefined threshold.
The reliable service probability is denoted as $\mathbb{P}_{\mathrm {rel}}$ and
given by 
 \begin{align} \label{eq:rel}
 \mathbb{P}_{\mathrm {rel}}\triangleq1\!-\!\sum_{n=0}^{\infty} \mathbb{P}_N(n)\prod_{i=0}^{n}\mathbb{P} \left(\phi_i(h_i,r_i)\!>\! p_{\mathrm{th}}\right),
 \end{align}
where $\mathbb{P}_N(n)$  denotes  the probability that $n$ UAVs are available, 
$\mathbb{P}_N(n)\prod_{i=0}^{n}\mathbb{P} \left(\phi_i(h_i,r_i)\!>\! p_{\mathrm{th}}\right)$  denotes the probability that $n$ UAVs are available and the dynamic blockage probability of each link is greater than the  threshold  $p_{\mathrm{th}}$.
Therefore, the second term of \eqref{eq:rel} represents the probability that the user is not in a reliable service. 
Note that $n=0$  will result in $i=0$,
 which means  there are no UAVs available. So we let $\mathbb{P} \left(\phi_i(h_i,r_i)\!>\! p_{\mathrm{th}}\right)|_{i=0}=1$.
To calculate $ \mathbb{P}_{\mathrm {rel}}$, we first give the distribution of UAVs availability in the following.
  \end{Definition}
  
\subsection{Distribution of UAVs Availability}
 We use an indicative random variable  $C_i$ to indicate that  the $i$-th  UAV is available,   $ \mathbb{P}(C_i)$ denotes  the probability of  $C_i$, which is also the probability that the $i$-th  link 
 is not blocked by  static  and self-blockage.
 Then,    
the distribution of the number 
of available UAVs  is obtained as:
\begin{Lemma}  \label{Theorem_avai}
The distribution of the number 
of available UAVs $N$ is Poisson distributed
 with parameter $\mathbb{P}(C_i)\lambda_T\pi{R}^{2}$, i.e.,  
\begin{align} \label{eq:{P}_{N}}
   \mathbb{P}_{N}(n)=\frac{{[ \mathbb{P}(C_i)\lambda_T\pi{R}^{2}]}^{n}}{n!}\exp\left({-\mathbb{P}(C_i)\lambda_T\pi{R}^{2}}\right),
 \end{align}
  where  $\mathbb{P}(C_i)\approx\left(1\!-\!\frac{\theta}{2\pi}\right)\frac{2\exp{(\!-\epsilon_0)}}{R^2\epsilon^2}\left(1\!-\!(1\!+\!R\epsilon )\exp{(\!-R\epsilon )}\right)$.
\end{Lemma}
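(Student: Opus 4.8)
The plan is to obtain $\mathbb{P}_N(n)$ from the thinning (independent marking) theorem for Poisson point processes, and then to recover $\mathbb{P}(C_i)$ by reinterpreting the resulting Poisson mean as the per-UAV availability probability spatially averaged over the service disk. First I would recall that the central positions of the $M$ candidate UAVs form a homogeneous PPP of intensity $\lambda_T$ on the disk of radius $R$, so $M$ is Poisson with mean $\lambda_T\pi R^2$. A UAV is \emph{available} precisely when its link survives both static blockage and self-blockage. Treating these survival events as independent marks attached to each PPP point, the ``available'' points form an independent thinning of the original PPP; by the thinning theorem the retained points again constitute a (location-dependent) PPP, so the count $N$ of available UAVs inside the disk is Poisson. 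It then remains only to evaluate its mean and show it equals $\mathbb{P}(C_i)\lambda_T\pi R^2$.

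Next I would write down the location-dependent retention probability. Combining the self-blockage model \eqref{eq:self} (survival probability $1-\tfrac{\theta}{2\pi}$, independent of position) with the static-blockage model \eqref{eq:static1} (survival probability $\tilde\psi_i=e^{-(\epsilon r_i+\epsilon_0)}$), a UAV at 2D distance $r$ is available with probability $p(r)=\bigl(1-\tfrac{\theta}{2\pi}\bigr)e^{-(\epsilon r+\epsilon_0)}$. Here I would invoke Corollary \ref{Corollary_s} (Remark 3), which shows that position fluctuations have negligible effect on static blockage ($\sigma_{\tilde\psi_i}^2\approx0$), so the real-time distance $r_i$ may be replaced by its mean; this is the origin of the ``$\approx$'' in the claimed expression for $\mathbb{P}(C_i)$.

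The mean of the thinned process is then its intensity measure over the disk,
\begin{align*}
\Lambda=\int_0^R p(r)\,\lambda_T\,2\pi r\,dr
=2\pi\lambda_T\Bigl(1-\tfrac{\theta}{2\pi}\Bigr)e^{-\epsilon_0}\int_0^R r\,e^{-\epsilon r}\,dr .
\end{align*}
The remaining integral is elementary, $\int_0^R r\,e^{-\epsilon r}\,dr=\tfrac{1}{\epsilon^2}\bigl(1-(1+R\epsilon)e^{-R\epsilon}\bigr)$, obtained by a single integration by parts. Factoring $\lambda_T\pi R^2$ out of $\Lambda$ gives $\Lambda=\mathbb{P}(C_i)\lambda_T\pi R^2$ with $\mathbb{P}(C_i)$ exactly as stated; equivalently, $\mathbb{P}(C_i)=\tfrac{2}{R^2}\int_0^R r\,p(r)\,dr$ is the average of $p(r)$ under the uniform-in-disk density $f_r(r)=2r/R^2$, confirming the interpretation.

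The main obstacle I anticipate is justifying the thinning step, i.e.\ that availability is a genuinely independent mark on each PPP point. Self-blockage is truly independent (each UAV falls in the fixed sector with probability $\theta/2\pi$ by its uniform angular position, independently across points), but the static blockages of distinct links share the same building field and are therefore only approximately independent; together with the fluctuation approximation of Corollary \ref{Corollary_s}, this is what renders the result approximate rather than exact. Once independence is granted, the integral evaluation and the re-identification of $\mathbb{P}(C_i)$ as a spatial average are routine.
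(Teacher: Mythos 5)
Your proposal is correct and follows essentially the same route as the paper: the paper establishes the Poisson claim via the equivalent binomial-within-Poisson computation (i.e., exactly the thinning theorem you invoke), and obtains $\mathbb{P}(C_i)$ by the same spatial average $\frac{2}{R^2}\int_0^R r\,p(r)\,dr$ over the uniform-in-disk density of $\mu_{r_i}$. The only difference is that the paper first averages over the Gaussian fluctuation of $r_i$, producing the intermediate factor $\exp(\epsilon^2\sigma^2/2)\,\mathrm{erf}\!\left(\frac{3+\epsilon\sigma}{\sqrt{2}}\right)$ in \eqref{eq:C_i}, and only then argues it is $\approx 1$, whereas you discard the fluctuation up front by citing Remark 3 --- the same approximation applied at a different point.
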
   
\begin{proof}
The proof is given in Appendix C.
\end{proof}

Then,  we derive the closed-form expression of reliable service probability for single and multiple UAV cases according to  Definition 1.
It indicates  that the greater the  fluctuation strength $\sigma$ of the UAV, the lower the $\mathbb{P}_{\mathrm {rel}}$, and the worse the QoS. The specific details are given as follows.

\subsection{Reliable Service Probability for Single UAV Case}
In this case, we assume that the typical user can only link to one UAV, and denote the UAV as the $i$-th UAV.  
 Let $\mathbb{P}_{\mathrm {rel}}^{\mathrm{sig}}$  represent the reliable service probability for the single UAV case, it indicates the probability that the $i$-th UAV is available and the dynamic blockage probability of the   link is lower than the   threshold $p_{\mathrm{th}}$. 
 Then,
$\mathbb{P}_{\mathrm {rel}}^{\mathrm{sig}}$  can be  obtained as shown in Theorem \ref{Theorem_single}.
\begin{Theorem}  \label{Theorem_single}
The  reliable service probability  for the single UAV case is given by 
\begin{align}   \label{eq:cov0}
    \mathbb{P}_{\mathrm {rel}}^{\mathrm{sig}}\approx \mathbb{P}(C_i) \left(\frac{1}{2}+\frac{1}{2}\mathrm{erf}\left(\frac{p_{\mathrm{th}}-\mu_{\phi_i}}{\sqrt{2}\sigma_{\phi_i}}\right)\right),
\end{align}
 where  $\mathbb{P}(C_i)$ is given in \eqref{eq:C_i_appro}, $\mu_{\phi_i}$ and  $\sigma_{\phi_i}$ are given in  \eqref{eq:P_B} and \eqref{eq:P(B|C)}, respectively.
\end{Theorem}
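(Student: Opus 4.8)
The plan is to read off $\mathbb{P}_{\mathrm{rel}}^{\mathrm{sig}}$ as a product of two probabilities and then evaluate the second factor with a Gaussian integral. Specializing Definition 1 to a single serving UAV, the user is in reliable service precisely when the $i$-th UAV is available \emph{and} the dynamic blockage probability of its link satisfies $\phi_i(h_i,r_i)\le p_{\mathrm{th}}$. Availability $C_i$ is determined by the static- and self-blockage events (driven by the building PPP and the user's body), whereas the event $\{\phi_i\le p_{\mathrm{th}}\}$ is driven by the independent population of moving human blockers; treating these as independent gives
\begin{align}
\mathbb{P}_{\mathrm{rel}}^{\mathrm{sig}}\approx\mathbb{P}(C_i)\,\mathbb{P}\!\left(\phi_i(h_i,r_i)\le p_{\mathrm{th}}\right),
\end{align}
where $\mathbb{P}(C_i)$ is already supplied by Lemma \ref{Theorem_avai}, equation \eqref{eq:C_i_appro}.

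It then remains to compute $\mathbb{P}(\phi_i\le p_{\mathrm{th}})$, which is just the CDF of $\phi_i$ evaluated at the threshold. By Theorem \ref{Theorem1}, $\phi_i(h_i,r_i)$ is approximately Gaussian with mean $\mu_{\phi_i}$, variance $\sigma_{\phi_i}^2$, and PDF \eqref{eq:PDF_PB}, so
\begin{align}
\mathbb{P}\!\left(\phi_i\le p_{\mathrm{th}}\right)\approx\int_{-\infty}^{p_{\mathrm{th}}}\frac{1}{\sqrt{2\pi}\,\sigma_{\phi_i}}\exp\!\left(-\frac{(\phi_i-\mu_{\phi_i})^2}{2\sigma_{\phi_i}^2}\right)\!d\phi_i.
\end{align}
Substituting $t=(\phi_i-\mu_{\phi_i})/(\sqrt{2}\,\sigma_{\phi_i})$ and using $\mathrm{erf}(x)=\frac{2}{\sqrt{\pi}}\int_0^x e^{-t^2}dt$ converts this into $\frac12+\frac12\mathrm{erf}\!\big((p_{\mathrm{th}}-\mu_{\phi_i})/(\sqrt{2}\,\sigma_{\phi_i})\big)$. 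Multiplying by $\mathbb{P}(C_i)$ then yields the stated expression \eqref{eq:cov0}.

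The algebra is elementary; the two points that need care are the approximations, and this is where the only genuine obstacle lies. First, the independence used in the factorization is not exact, since both $C_i$ and $\phi_i$ depend on the shared random position of the UAV; the justification is that the static-blockage probability is essentially insensitive to the position fluctuations (Remark 3), so the coupling through position is negligible and the product of marginals is accurate. Second, the true support of $\phi_i$ is $[0,1]$ while the Gaussian model of Theorem \ref{Theorem1} places mass on all of $\mathbb{R}$; hence the error-function identity holds only approximately, which is why the statement carries ``$\approx$'' rather than equality. Under the paper's small-$\sigma$ regime, $\sigma_{\phi_i}$ is small compared with the distance from $\mu_{\phi_i}$ to the endpoints $0$ and $1$, so the excess tail mass is negligible and the approximation is tight.
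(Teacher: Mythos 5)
Your proposal is correct and follows essentially the same route as the paper: the paper obtains the factorization $\mathbb{P}_{\mathrm{rel}}^{\mathrm{sig}}=\mathbb{P}(C_i)\,\mathbb{P}(\phi_i\le p_{\mathrm{th}})$ by substituting $\mathbb{P}_N(0)=1-\mathbb{P}(C_i)$ and $\mathbb{P}_N(1)=\mathbb{P}(C_i)$ into Definition 1 (where the product form already encodes the independence you argue for), and then evaluates $\mathbb{P}(\phi_i\le p_{\mathrm{th}})$ via the Gaussian CDF from Theorem \ref{Theorem1} exactly as you do. Your added remarks on the truncation of the Gaussian to $[0,1]$ and on why the factorization is harmless are sound but not part of the paper's argument.
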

\begin{proof}
Since  in this case we only consider one   UAV is available,    \eqref{eq:rel} can be rewritten as   
\begin{align}  \label{eq:P_single}   
    \mathbb{P}_{\mathrm {rel}}^{\mathrm{sig}}=1\!-\!\sum_{n=0}^{1} \mathbb{P}_N(n)\prod_{i=0}^{n}\mathbb{P} \left(\phi_i(h_i,r_i)\!>\! p_{\mathrm{th}}\right),
\end{align}
where $\mathbb{P}_N(1)$ is indeed the probability that the $i$-th  UAV is 
 available, i.e., $\mathbb{P}_N(1)=\mathbb{P}(C_i)$, and
$\mathbb{P}_N(0)=1-\mathbb{P}(C_i)$. 
Substituting $\mathbb{P}_N(0)$ and $\mathbb{P}_N(1)$
  into \eqref{eq:P_single}, then using Theorem \ref{Theorem1},
 we can get
\begin{align}  \label{eq:cov00}
   \mathbb{P}_{\mathrm {rel}}^{\mathrm{sig}}&=\mathbb{P}(C_i)- \mathbb{P}(C_i)\mathbb{P} \left(\phi_i(h_i,r_i)\!>\! p_{\mathrm{th}}\right)\nonumber \\
    &= \mathbb{P}(C_i)\int_{0}^{p_{\mathrm{th}}}  f_{\phi_i}(\phi_i)d{\phi_i} \nonumber \\
    &\approx \mathbb{P}(C_i) \left(\frac{1}{2}+\frac{1}{2}\mathrm{erf}\left(\frac{p_{\mathrm{th}}-\mu_{\phi_i}}{\sqrt{2}\sigma_{\phi_i}}\right)\right),
\end{align}
where  the last step  is obtained by integrating 
$f_{\phi_i}(\phi_i)$ according to 
\cite [Eq. (3)] {2016Analysis}. 
\end{proof}

According to Theorem \ref{Theorem_single}, we can obtain the effect of  $\sigma$ on the QoS. 
In \eqref{eq:cov0}, $\mathbb{P}(C_i)$ and $\mu_{\phi_i}$ are not affected by $\sigma$ according to \eqref{eq:C_i_appro} and \eqref{eq:P_B}, respectively.
So only $\sigma_{\phi_i}$ is related to $\sigma$, and $\sigma_{\phi_i}$ is an increasing function of $\sigma$, which can be observed from \eqref{eq:P(B|C)},  we first analyze the effect of $\sigma_{\phi_i}$ on $\mathbb{P}_{\mathrm {rel}}^{\mathrm{sig}}$.
Since  error function erf$(\cdot)$ is  a monotonically    function and erf$(0)\!=\!0$,  $\sigma_{\phi_i}$ will 
 show two 
 different effects on  $\mathbb{P}_{\mathrm {rel}}^{\mathrm{sig}}$. For  $p_{\mathrm{th}}> \mu_{\phi_i}$, 
erf$(\cdot)>0$  and  $\mathbb{P}_{\mathrm {rel}}^{\mathrm{sig}}$ decreases as $\sigma_{\phi_i}$ increases.
For  $p_{\mathrm{th}}< \mu_{\phi_i}$,  erf$(\cdot)\!<\!0$  and $\mathbb{P}_{\mathrm {rel}}^{\mathrm{sig}}$ increases as $\sigma_{\phi_i}$ increases. However, in the latter situation, the value of $\mathbb{P}_{\mathrm {rel}}^{\mathrm{sig}}$ is  lower, even  less than 0.5, which  indicates  that  the QoS  is evil, further analyze the effect of $\sigma$ on the QoS is almost meaningless.
Hence, throughout this paper,
 we pay more attention to the former case ($p_{\mathrm{th}}> \mu_{\phi_i}$). We can conclude that the greater the $\sigma$ is, the larger the $\sigma_{\phi_i}$ is,  the lower the $\mathbb{P}_{\mathrm {rel}}^{\mathrm{sig}}$ is, and the worse the QoS is.
 It is worth noting that from the derivation of \eqref{eq:cov00}, it can be found that $\mathbb{P}_{\mathrm {rel}}^{\mathrm{sig}}$ is similar to the CDF of $\phi_i(h_i,r_i)$, which 
 is normally distributed  according to Theorem 1.
 Therefore, when $\sigma_{\phi_i}$  changes, the variation of  $\mathbb{P}_{\mathrm {rel}}^{\mathrm{sig}}$ is similar to that of the CDF of $\phi_i(h_i,r_i)$. 
 The above analysis is also consistent with this phenomenon.
  Moreover, for an  open area  scenario  (park/stadium/square)\cite{2019The},  we can  rewrite 
$\mathbb{P}_{\mathrm {rel}}^{\mathrm{sig}}$ as follows.
\begin{Corollary} \label{Corollary1}
For  an open  area communication scenario, 
$\mathbb{P}_{\mathrm {rel}}^{\mathrm{sig}}$
can  be rewritten as follows:
\begin{align} \label{eq:rel1}    
 \mathbb{P}_{\mathrm {rel}}^{\mathrm{sig}}\approx\left(1-\frac{
\theta}{2\pi}\right)  \left(\frac{1}{2}+\frac{1}{2}\mathrm{erf}\left(\frac{p_{\mathrm{th}}-\mu_{\phi_i}}{\sqrt{2}\sigma_{\phi_i}}\right)\right)
\end{align}
\end{Corollary}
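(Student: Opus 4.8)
The plan is to specialize the general single-UAV result of Theorem \ref{Theorem_single} to the open-area setting, where the only quantity that changes is the availability probability $\mathbb{P}(C_i)$: it should collapse to its self-blockage factor alone. An open area such as a park, stadium, or square contains essentially no buildings, so the density of static obstacles vanishes, $\lambda_S\to 0$, and consequently both static-blockage parameters tend to zero, $\epsilon_0=\lambda_S\mathbb{E}(l)\mathbb{E}(w)\to 0$ and $\epsilon=\frac{2}{\pi}\lambda_S(\mathbb{E}(l)+\mathbb{E}(w))\to 0$. I would therefore take the closed form of $\mathbb{P}(C_i)$ supplied by Lemma \ref{Theorem_avai} and evaluate it in this limit.

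The one step that needs care is showing that the static-blockage factor
\begin{align} \label{eq:openfactor}
    \frac{2\exp(-\epsilon_0)}{R^2\epsilon^2}\left(1-(1+R\epsilon)\exp(-R\epsilon)\right)
\end{align}
converges to $1$ as $\epsilon,\epsilon_0\to 0$. The prefactor $\exp(-\epsilon_0)\to 1$ is immediate, but the remaining quotient is an indeterminate $0/0$ form and is the main obstacle. I would resolve it with a second-order Taylor expansion $\exp(-R\epsilon)=1-R\epsilon+\tfrac{1}{2}R^2\epsilon^2+o(\epsilon^2)$, which gives $(1+R\epsilon)\exp(-R\epsilon)=1-\tfrac{1}{2}R^2\epsilon^2+o(\epsilon^2)$ and hence $1-(1+R\epsilon)\exp(-R\epsilon)=\tfrac{1}{2}R^2\epsilon^2+o(\epsilon^2)$. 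Dividing by $R^2\epsilon^2/2$ leaves $1+o(1)$, so the whole expression in \eqref{eq:openfactor} tends to $1$.

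With that limit established, the formula for $\mathbb{P}(C_i)$ from Lemma \ref{Theorem_avai} reduces to its self-blockage term, $\mathbb{P}(C_i)\to 1-\frac{\theta}{2\pi}$, which reflects the physical picture that in an open area a link can only be permanently lost to the user's own body. Substituting this value for $\mathbb{P}(C_i)$ in the general expression \eqref{eq:cov0} of Theorem \ref{Theorem_single} then yields \eqref{eq:rel1} directly; the dynamic-blockage term carried by $\mathrm{erf}(\cdot)$ is untouched, since it depends on the position statistics $\mu_{\phi_i}$ and $\sigma_{\phi_i}$ rather than on $\lambda_S$. The only nontrivial part of the argument is the indeterminate-form limit in \eqref{eq:openfactor}; the rest is substitution into Theorem \ref{Theorem_single}.
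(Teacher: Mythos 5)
Your proof is correct and reaches the same endpoint as the paper's, but it gets to $\mathbb{P}(C_i)=1-\frac{\theta}{2\pi}$ by a genuinely different technical path. The paper works \emph{before} marginalization: it sets $\lambda_S\approx 0$, so that by \eqref{eq:static1} the static-blockage probability $\psi_i(h_i,r_i)\approx 0$ and the conditional availability $\mathbb{P}(C_i|h_i,r_i)=1-\frac{\theta}{2\pi}$ becomes a constant independent of $(h_i,r_i)$; the marginal then equals the conditional with no integration and no limit to evaluate. You instead work \emph{after} marginalization, taking the closed form of Lemma \ref{Theorem_avai} (equation \eqref{eq:C_i_appro}) and sending $\epsilon,\epsilon_0\to 0$, which forces you to resolve the $0/0$ form $\frac{2}{R^2\epsilon^2}\bigl(1-(1+R\epsilon)e^{-R\epsilon}\bigr)$ via the second-order Taylor expansion; that computation is correct ($1-(1+x)e^{-x}=\tfrac{x^2}{2}+o(x^2)$). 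The paper's route is shorter and avoids the indeterminate form entirely, while yours has the side benefit of verifying that the marginalized closed form in Lemma \ref{Theorem_avai} is consistent with the open-area limit rather than assuming the two specializations commute. The final substitution into \eqref{eq:cov0} of Theorem \ref{Theorem_single}, and the observation that the $\mathrm{erf}$ factor is unaffected because $\mu_{\phi_i}$ and $\sigma_{\phi_i}$ do not depend on $\lambda_S$, match the paper exactly.
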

\begin{proof}
For the open  area
 scenario, such as a public park,
 buildings play a small
role. Therefore, we can assume that
 $\lambda_S\!\approx\!0$, and 
$\psi_i(h_i,r_i)\approx\!0$ by using  \eqref{eq:static1}. 
Then,   the conditional probability that the $i$-th UAV is available  is $\mathbb{P}(C_i|h_i,r_i)=1-\frac{
\theta}{2\pi}$. Since $\theta$ is   independent of
$h_i$ and $r_i$, we  have $\mathbb{P}(C_i)=\mathbb{P}(C_i|h_i,r_i)$. Substituting the $\mathbb{P}(C_i)$ into \eqref{eq:cov0}, we can obtain
 $\mathbb{P}_{\mathrm {rel}}^{\mathrm{sig}}$ as shown in \eqref{eq:rel1}.
\end{proof}

\subsection{Reliable Service Probability for Multiple UAVs  Case}
In this case,   we  assume that the typical user has more than one UAVs that can be connected, and it will immediately switch to any other available UAV when the  currently serving  link is blocked. 
We define $\mathbb{P}_{\mathrm {rel}}^{\mathrm{mul}}$ as the reliable service probability for this case, it represents the probability that at least one UAV is available with a dynamic blockage probability lower than the predefined  threshold $p_{\mathrm{th}}$.
Then, we obtain the approximate $\mathbb{P}_{\mathrm {rel}}^{\mathrm{mul}}$ as shown in Theorem \ref{Theorem_multiple}.
\begin{Theorem}  \label{Theorem_multiple}
The  reliable service probability  for the multiple UAVs case is given by 
\begin{align}  \label{eq:Theorem_multiple}
  \mathbb{P}_{\mathrm {rel}}^{\mathrm{mul}}\approx&1\!-\!\mathbb{P}_N(0)\!-\!\sum_{n=1}^{\infty} 
    \mathbb{P}_N(n)\prod_{i=1}^{n} \left(\frac{1}{2}\!-\!\frac{1}{2}\mathrm{erf}\left(\frac{p_{\mathrm{th}}\!-\!\mu_{\phi_i}}{\sqrt{2}\sigma_{\phi_i}}\right) \!\right),
\end{align}
where $\mu_{\phi_i}$ and  $\sigma_{\phi_i}$ are given in  \eqref{eq:P_B} and \eqref{eq:P(B|C)}, respectively,
  $i=1,\cdots,n$.
\end{Theorem}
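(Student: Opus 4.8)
The plan is to derive \eqref{eq:Theorem_multiple} directly from Definition 1 by specializing the general reliable-service expression \eqref{eq:rel}, so that the only new ingredient beyond the definition is the per-link exceedance probability supplied by Theorem \ref{Theorem1}. First I would rewrite the infinite sum in \eqref{eq:rel} by peeling off the $n=0$ term. Because the convention $\mathbb{P}(\phi_i(h_i,r_i)>p_{\mathrm{th}})|_{i=0}=1$ forces the single factor in $\prod_{i=0}^{0}$ to equal one, the $n=0$ summand collapses to $\mathbb{P}_N(0)$, which is precisely the ``no available UAV'' contribution appearing in the theorem. For every $n\ge 1$ the same convention makes the $i=0$ factor equal to one, so $\prod_{i=0}^{n}$ reduces to $\prod_{i=1}^{n}$, leaving a sum over $n\ge 1$ of $\mathbb{P}_N(n)\prod_{i=1}^{n}\mathbb{P}(\phi_i(h_i,r_i)>p_{\mathrm{th}})$.

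The key computational step is to evaluate the tail probability of a single link. By Theorem \ref{Theorem1}, $\phi_i(h_i,r_i)$ is approximately Gaussian with mean $\mu_{\phi_i}$ and variance $\sigma_{\phi_i}^2$, so its complementary CDF at $p_{\mathrm{th}}$ is
\begin{align}
  \mathbb{P}(\phi_i(h_i,r_i)>p_{\mathrm{th}}) \approx 1-\left(\frac{1}{2}+\frac{1}{2}\mathrm{erf}\left(\frac{p_{\mathrm{th}}-\mu_{\phi_i}}{\sqrt{2}\sigma_{\phi_i}}\right)\right) = \frac{1}{2}-\frac{1}{2}\mathrm{erf}\left(\frac{p_{\mathrm{th}}-\mu_{\phi_i}}{\sqrt{2}\sigma_{\phi_i}}\right),
\end{align}
which is exactly the factor inside the product in \eqref{eq:Theorem_multiple}; here I would reuse the same Gaussian-integral identity already invoked in the proof of Theorem \ref{Theorem_single}. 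Substituting this expression back into the reduced sum and reinstating the Poisson availability law $\mathbb{P}_N(n)$ from Lemma \ref{Theorem_avai} yields \eqref{eq:Theorem_multiple}. A useful sanity check is that truncating the sum at $n=1$ recovers Theorem \ref{Theorem_single}, since $\mathbb{P}_N(1)=\mathbb{P}(C_i)$.

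The approximation sign in the theorem has two sources, and the second is the one I would treat most carefully. The first, benign, source is simply the Gaussian approximation of $\phi_i(h_i,r_i)$ inherited from Theorem \ref{Theorem1}. The more delicate point, which I regard as the main obstacle to a fully rigorous argument, is the product structure itself: writing $\prod_{i=1}^{n}$ presupposes that, conditioned on exactly $n$ UAVs being available, the $n$ exceedance events $\{\phi_i(h_i,r_i)>p_{\mathrm{th}}\}$ are mutually independent. I would justify this by appealing to the PPP model, under which, given $N=n$, the central positions of the available UAVs are independent over the service disc, so the induced distances $r_i$ — and hence the variables $\phi_i(h_i,r_i)$ — are independent across links. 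This independence, already implicit in Definition 1, is what legitimizes passing from the joint ``none of the $n$ links is good'' probability to the product of per-link tail probabilities.
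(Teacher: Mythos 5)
Your proof is correct and follows essentially the same route as the paper: peel off the $n=0$ term of \eqref{eq:rel} and replace each tail probability $\mathbb{P}\left(\phi_i(h_i,r_i)>p_{\mathrm{th}}\right)$ by $\frac{1}{2}-\frac{1}{2}\mathrm{erf}\left(\frac{p_{\mathrm{th}}-\mu_{\phi_i}}{\sqrt{2}\sigma_{\phi_i}}\right)$ using the Gaussian approximation of Theorem \ref{Theorem1}, the product structure being already built into Definition 1. The only caveat is your final sanity check: Theorem \ref{Theorem_single} uses the Bernoulli convention $\mathbb{P}_N(1)=\mathbb{P}(C_i)$, $\mathbb{P}_N(0)=1-\mathbb{P}(C_i)$ rather than the Poisson law of Lemma \ref{Theorem_avai}, so truncating \eqref{eq:Theorem_multiple} at $n=1$ does not literally reproduce \eqref{eq:cov0}.
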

\begin{proof}
In this case, $\mathbb{P}_{\mathrm {rel}}^{\mathrm{mul}}$ is expressed as shown in \eqref{eq:rel}. Therefore, we can get
\begin{align}  \label{eq:P_rel}
    \mathbb{P}_{\mathrm {rel}}^{\mathrm{mul}}=&1\!-\!\mathbb{P}_N(0)\!-\!\sum_{n=1}^{\infty} \mathbb{P}_N(n)\prod_{i=1}^{n} 
    \mathbb{P} \left(\phi_i(h_i,r_i)\!>\! p_{\mathrm{th}}\right)
        \nonumber \\
     \approx&1\!-\!\mathbb{P}_N(0)\!-\!\sum_{n=1}^{\infty} 
    \mathbb{P}_N(n)\!\prod_{i=1}^{n} \left(\frac{1}{2}\!-\!\frac{1}{2}\mathrm{erf}\left(\frac{p_{\mathrm{th}}\!-\!\mu_{\phi_i}}{\sqrt{2}\sigma_{\phi_i}}\right) \right),
\end{align}
where 
$\mathbb{P} \left(\phi_i(h_i,r_i)> p_{\mathrm{th}}\right)=1-\int_{0}^{p_{\mathrm{th}}}f_{\phi_i}(\phi_i)d{\phi_i}\approx\frac{1}{2}-\frac{1}{2}\mathrm{erf}\left(\frac{p_{\mathrm{th}}-\mu_{\phi_i}}{\sqrt{2}\sigma_{\phi_i}}\right)$ according to 
\eqref{eq:cov00}.
\end{proof}

Although Theorem \ref{Theorem_multiple} gives an approximate expression of $\mathbb{P}_{\mathrm {rel}}^{\mathrm{mul}}$, the expression   is  not tractable
 due to $n\in [0,\infty]$. 
Fortunately, since it is unrealistic to have an infinite number of UAVs in practice, 
we can assume the maximum number of UAVs is $K$.
Then, $\mathbb{P}_{\mathrm {rel}}^{\mathrm{mul}}$  can be  rewritten as:
 \begin{align}  \label{eq:P_rel2}
   \mathbb{P}_{\mathrm {rel}}^{\mathrm{mul}}\!\approx&1\!-\!\mathbb{P}_N(0)\!-\!\sum_{n=1}^{K} 
    \mathbb{P}_N(n)\prod_{i=1}^{n} \left(\frac{1}{2}\!-\!\frac{1}{2}\mathrm{erf}\left(\frac{p_{\mathrm{th}}\!-\!\mu_{\phi_i}}{\sqrt{2}\sigma_{\phi_i}}\right) \right),
\end{align}
which is solvable.
Furthermore, 
we can get the lower  bound of $\mathbb{P}_{\mathrm {rel}}^{\mathrm{mul}}$ as shown in Corollary \ref{Corollary2}.
\begin{Corollary} \label{Corollary2}
A  lower  bound  of $\mathbb{P}_{\mathrm {rel}}^{\mathrm{mul}}$  is given by 
 \begin{align}  \label{eq:Corollary2}
     \mathbb{P}_{\mathrm {rel}}^{\mathrm{mul}}\ge1\!-\!\exp\left({\!-\mathbb{P}(C_i)\lambda_T\pi{R}^{2}\left(\frac{1}{2}\!+\!\frac{1}{2}\mathrm{erf}\left(\frac{p_{\mathrm{th}} -\mu_{\phi}}{\sqrt{2}\sigma_{\phi}}\right)\right)}\right),
\end{align}
where $\mu_{\phi}=\max(\mu_{\phi_i})$ and $\sigma_{\phi}=\max(\sigma_{\phi_i})$, $i=1,\cdots,n$. 
\end{Corollary}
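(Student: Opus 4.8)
The plan is to start from the untruncated expression for $\mathbb{P}_{\mathrm {rel}}^{\mathrm{mul}}$ established in Theorem \ref{Theorem_multiple}, namely
\begin{align}
  \mathbb{P}_{\mathrm {rel}}^{\mathrm{mul}} = 1 - \mathbb{P}_N(0) - \sum_{n=1}^{\infty} \mathbb{P}_N(n)\prod_{i=1}^{n}q_i,
\end{align}
where I abbreviate $q_i \triangleq \frac{1}{2} - \frac{1}{2}\mathrm{erf}\!\left(\frac{p_{\mathrm{th}} - \mu_{\phi_i}}{\sqrt{2}\sigma_{\phi_i}}\right) = \mathbb{P}\!\left(\phi_i(h_i,r_i) > p_{\mathrm{th}}\right)$, the per-link failure probability. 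Substituting the Poisson form of $\mathbb{P}_N(n)$ from Lemma \ref{Theorem_avai} and writing $\Lambda \triangleq \mathbb{P}(C_i)\lambda_T\pi R^2$ for brevity, the tail sum becomes a weighted exponential series in $\Lambda$, which I intend to collapse into closed form after replacing the product $\prod_{i=1}^n q_i$ by a single $i$-independent upper bound.

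The core step is that uniform bound. Working in the regime of interest $p_{\mathrm{th}} > \mu_{\phi_i}$ isolated after Theorem \ref{Theorem_single}, the erf-argument is positive, so $q_i$ is monotonically increasing in $\mu_{\phi_i}$ (via the numerator) and, because that numerator stays positive, also increasing in $\sigma_{\phi_i}$ (via the denominator). Hence taking $\mu_\phi = \max_i \mu_{\phi_i}$ and $\sigma_\phi = \max_i \sigma_{\phi_i}$ simultaneously only shrinks the argument further, yielding $q_i \le q \triangleq \frac{1}{2} - \frac{1}{2}\mathrm{erf}\!\left(\frac{p_{\mathrm{th}} - \mu_\phi}{\sqrt{2}\sigma_\phi}\right)$ for every $i$, and therefore $\prod_{i=1}^n q_i \le q^n$. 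This is the inequality that drives the whole bound.

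Then I would insert $\prod_{i=1}^n q_i \le q^n$ into the series and recognize the result as a shifted exponential series,
\begin{align}
  \sum_{n=1}^{\infty} \frac{\Lambda^n}{n!}\,e^{-\Lambda}\,q^n = e^{-\Lambda}\!\left(e^{\Lambda q} - 1\right),
\end{align}
so that, with $\mathbb{P}_N(0) = e^{-\Lambda}$, the two $e^{-\Lambda}$ contributions cancel and the bound telescopes to $\mathbb{P}_{\mathrm {rel}}^{\mathrm{mul}} \ge 1 - e^{-\Lambda(1-q)}$. Identifying $1 - q = \frac{1}{2} + \frac{1}{2}\mathrm{erf}\!\left(\frac{p_{\mathrm{th}} - \mu_\phi}{\sqrt{2}\sigma_\phi}\right)$ and re-expanding $\Lambda$ reproduces exactly \eqref{eq:Corollary2}. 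The one genuinely delicate point is the joint maximization in the second paragraph: one must confirm that substituting the \emph{separate} maxima of $\mu_{\phi_i}$ and $\sigma_{\phi_i}$ is legitimate, i.e. that $q_i$ is jointly monotone in $(\mu_{\phi_i}, \sigma_{\phi_i})$ over the relevant range. This hinges entirely on $p_{\mathrm{th}} > \mu_\phi$ keeping the erf-argument positive, and it is the step deserving explicit justification rather than the routine summation of the series.
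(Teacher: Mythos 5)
Your proposal is correct and follows essentially the same route as the paper: bound each factor $\mathbb{P}(\phi_i > p_{\mathrm{th}})$ uniformly by the value obtained at $\mu_\phi=\max_i\mu_{\phi_i}$ and $\sigma_\phi=\max_i\sigma_{\phi_i}$ (valid precisely because $p_{\mathrm{th}}>\mu_{\phi_i}$ keeps the erf-argument positive, as you note), then collapse the resulting Poisson series via its generating function to get $1-e^{-\Lambda(1-q)}$. The joint-monotonicity caveat you flag is exactly the point the paper relies on (and states only briefly), so there is nothing further to add.
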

\begin{proof}
Since
 erf$(\cdot)$ is an increasing function and  we focus on  $p_{\mathrm{th}}> \mu_{\phi_i}$, 
 $\mathrm{erf}\left(\frac{p_{\mathrm{th}}-\mu_{\phi_i}}{\sqrt{2}\sigma_{\phi_i}}\right)$  decreases  with the increase of $\mu_{\phi_i}$ and $\sigma_{\phi_i}$, respectively. Hence, $\mathbb{P}_{\mathrm {rel}}^{\mathrm{mul}}$ in \eqref{eq:Theorem_multiple} can achieve a lower bound
when  $\mu_{\phi_1}=\cdots=\mu_{\phi_n}=\max(\mu_{\phi_i})$ and   $\sigma_{\phi_1}=\cdots=\sigma_{\phi_n}=\max(\sigma_{\phi_i})$,
$i=1,\cdots,n$, i.e.,
\begin{align}  \label{eq:P_rel1}
    \mathbb{P}_{\mathrm {rel}}^{\mathrm{mul}}&\ge1-\sum_{n=0}^{\infty} \mathbb{P}_N(n)\left(\frac{1}{2}-\frac{1}{2}\mathrm{erf}\left(\frac{p_{\mathrm{th}} -\mu_{\phi}}{\sqrt{2}\sigma_{\phi}}\right) \right)^n, \nonumber \\
    &=1\!-\!\exp\left({\!-\mathbb{P}(C_i)\lambda_T\pi{R}^{2}\left(\frac{1}{2}\!+\!\frac{1}{2}\mathrm{erf}\left(\frac{p_{\mathrm{th}} \!-\!\mu_{\phi}}{\sqrt{2}\sigma_{\phi}}\right)\right)}\right), 
\end{align}  
where the last step is obtained by substituting \eqref{eq:{P}_{N}} 
into \eqref{eq:Theorem_multiple}.
\end{proof}

According to Theorem \ref{Theorem_multiple} and Corollary \ref{Corollary2}, we  can conclude that  the larger the $\sigma$ is, the smaller the $\mathbb{P}_{\mathrm {rel}}^{\mathrm{mul}}$ is, and the lower the QoS is, which is consistent with the single UAV case. 
 Similarly, when 
 $\mu_{\phi}=\min(\mu_{\phi_i})$ and $\sigma_{\phi}=\min(\sigma_{\phi_i})$, $i=1,\cdots,n$, we can obtain  an upper  bound of $\mathbb{P}_{\mathrm {rel}}^{\mathrm{mul}}$, which 
is similar to \eqref{eq:P_rel1} and is omitted for
brevity. For an open area  scenario, 
$\mathbb{P}_{\mathrm {rel}}^{\mathrm{mul}}$  can  be obtained   by setting  $\mathbb{P}(C_i)\!=\!1\!-\!\frac{\theta}{2\pi}$,  the  reason is   consistent with the proof of  Corollary \ref{Corollary1}.

Finally, a graphical example of the impact of  $\sigma$ on the  blockage and the  QoS is shown in Fig. \ref{fig:fluctuation}.  
This example depicts two snapshots of the  links' blockage status. 
As we can  see,  the  blockage  status and reliable service in the two snapshots is different even though the blockers are the same,  which is due to the position fluctuations of the UAV. 
Moreover, as discussed in Theorem 2 and Corollary 4, the function of reliable service probability has almost similar variation characteristics to the CDF of normal distribution. Therefore, when $p_{\mathrm{th}}> \mu_{\phi_i}$, the 
 higher the UAV's fluctuation, the smaller the probability of reliable service. 
In conclusion, the larger the $\sigma$ is, the lower the
 reliable service probability is, and the worse the QoS is.
\begin{figure}
\centering
  \includegraphics[scale=0.5]{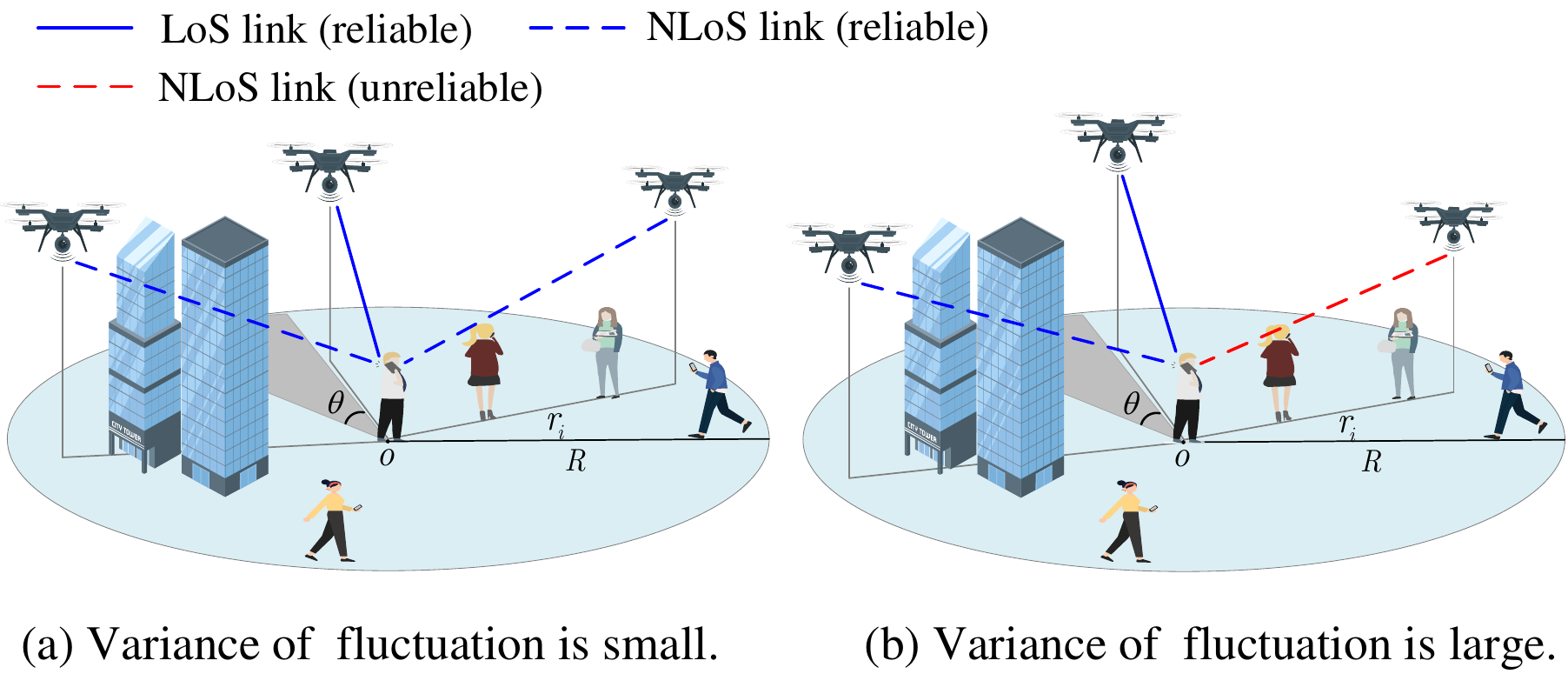}  
  \caption{ An  example of  the effect of hovering UAVs positions fluctuations on UAV-user links’ blockage status and reliable service.   
  The central positions of UAVs  in (a) and (b) are the same. 
  The snapshots in (a) and (b) show the possible positions of UAVs and blockage status under weak and strong fluctuations. 
  As can be seen, the greater the variance of position fluctuation, the more the UAV deviates from its central position. So the greater the dynamic blockage probability deviates from its mean value. Therefore, when $p_{\mathrm{th}}> \mu_{\phi_i}$, the  larger the fluctuation of UAV positions,
 the smaller the reliable service
   probability.
   }\label{fig:fluctuation}
\end{figure}
\section{Effect of Fluctuations on Coverage}
 Section \ref{Blockage}  evaluated the effect of    $\sigma$ on  reliable service probability, which is defined  respective to the blockage probability.
  This may not capture the full picture of the considered  system. Hence,  this section refers to coverage probability in terms of SNR  as another QoS measurement and evaluates
 the impact of  $\sigma$  on  this measurement. The coverage probability is defined as follows:
 
 \begin{Definition}
A user is said to be in coverage   if  at least one UAV is available with a SNR  not smaller than a predefined threshold.
The coverage probability  is denoted as $\mathbb{P}_{\mathrm {cov}}$ and
given by
 \begin{align} \label{eq:cov}
 \mathbb{P}_{\mathrm {cov}}\triangleq1\!-\!\sum_{n=0}^{\infty} \mathbb{P}_N(n)\prod_{i=0}^{n}\mathbb{P} \left(\gamma_{i} <\gamma_0 \right),
 \end{align}
where $\gamma_{i}$ denotes the SNR  at the  user from the  $i$-th UAV,
$\mathbb{P}_N(n)\prod_{i=0}^{n}\mathbb{P} \left(\gamma_{i} <\gamma_0 \right)$  denotes the probability that $n$ UAVs are available and the SNR of each link is smaller than the  
 threshold  $\gamma_0$.
Therefore, the second term of \eqref{eq:cov} denotes the probability that the user is not in coverage, and we consider that 
  $\mathbb{P} \left(\gamma_{i} <\gamma_0 \right)=1$ when $i=0$ since in this case there are no available UAVs.
  \end{Definition}

Next, we first 
develop a channel model for the considered  system. Then, 
the  SNR at the user is expressed, and the     coverage probability for single and multiple UAV cases are derived.

\subsection{Channel Model}
Since  UAV-user  link  is only dynamically blocked
 when the UAV is available, 
the LoS probability of the $i$-th  link 
can be expressed as follows:
\begin{equation}
      \label{eq:LoS}
     \mathbb{P}^{\rm{LoS}}_i=\tilde{\phi_i}(h_i,r_i)=\frac{\omega(h_i-h_R)}{{\rho r_i+\omega(h_i-h_R)}},
    \end{equation} 
 and we assume that  the channel between  UAVs and the target user is based on the dominant LoS.
Therefore,  the channel gain $g_{i}$ of the $i$-th UAV-user link is given  by  \cite{2019Accessing}:
\begin{align}  \label{eq:gain}
g_{i}=\mathbb{P}^{\rm{LoS}}_i\beta_{0}d_i^{-\alpha},
\end{align} 
where  $\beta_{0}$  is the path loss (PL) at unit distance,  $\alpha$  is    parameter
of the  PL, $d_i\!=\!\sqrt{r_i^2\!+\!(h_i\!-\!h_R)^2}$  is the 3D 
 distance from the $i$-th UAV to the typical user. Then, the  SNR  at the  user is given by
\begin{equation} \label{eq:SNR}
  \gamma_i=\frac{P_t g_i}{N_0},
\end{equation}
where $P_t$ is the transmit power of the UAV and $N_0$ is  the  noise power.  It is clear that  $ \gamma_i$ is  a
function of random variables $r_i$ and  $h_i$, which means  that 
 the position fluctuations of  UAVs will   affect the  $ \gamma_i$. Therefore, the coverage probability given  in Definition 2 can 
 capture these effects.

 
\subsection{Coverage Probability for Single UAV Case}
Similar to $\mathbb{P}_{\mathrm {rel}}^{\mathrm{sig}}$,
we define $\mathbb{P}_{\mathrm {cov}}^{\mathrm{sig}}$  as the coverage probability for the single UAV case, it represents the probability that the $i$-th UAV is available and the SNR of the $i$-th link is not smaller  than the threshold $\gamma_0$.
 Therefore, using Definition 2 and similar to  \eqref{eq:P_single} and \eqref{eq:cov00}, we first can get
\begin{align}  \label{eq:Pcov_single}   
    \mathbb{P}_{\mathrm {cov}}^{\mathrm{sig}}&=1\!-\!\sum_{n=0}^{1} \mathbb{P}_N(n)\prod_{i=0}^{n}\mathbb{P} \left(\gamma_{i} <\gamma_0 \right) \nonumber \\
    &=\mathbb{P}(C_i)\mathbb{P} \left(\gamma_{i} \ge\gamma_0 \right).
\end{align}
Then,  $\mathbb{P}_{\mathrm {cov}}^{\mathrm{sig}}$  is approximately 
 obtained as shown in Theorem \ref{Theorem2}.  \begin{Theorem}\label{Theorem2}
The   coverage probability  for the single UAV case is given by 
\begin{align}  \label{eq:P_out11}
\mathbb{P}_{\mathrm {cov}}^{\mathrm{sig}}&\approx \mathbb{P}(C_i)\left(1\!-\!Q_1\left(\frac{\mu_{d_i}}{\sigma},\frac{\tau_i}{\sigma}\right)\right),
\end{align}
where  $\mathbb{P}(C_i)$ is given in \eqref{eq:C_i_appro}, $Q_1(a,b)$ is the Marcum
Q-function and is  expressed in \eqref{eq:Q},
  $\mu_{d_i}=\sqrt{\mu_{r_i}^2+(\mu_{h_i}-h_R)^2}$, and $\tau_i={\sqrt[{\alpha}]{\frac{P_t\beta_0\omega(\mu_{h_i}\!-\!h_R)}{N_0\gamma_0({
\rho \mu_{r_i}+\omega(\mu_{h_i}\!-\!h_R)})}
}}$ is used for convenience.
\end{Theorem}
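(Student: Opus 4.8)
The plan is to reduce everything to the single-link coverage event. Equation \eqref{eq:Pcov_single} already gives $\mathbb{P}_{\mathrm {cov}}^{\mathrm{sig}} = \mathbb{P}(C_i)\,\mathbb{P}(\gamma_i \ge \gamma_0)$, so the entire task is to evaluate $\mathbb{P}(\gamma_i \ge \gamma_0)$. Combining the channel gain \eqref{eq:gain}, the SNR definition \eqref{eq:SNR}, and the LoS probability \eqref{eq:LoS}, I would first write the SNR explicitly as a function of the random UAV position $(r_i,h_i)$,
\begin{equation*}
\gamma_i = \frac{P_t\beta_0}{N_0}\,\frac{\omega(h_i-h_R)}{\rho r_i+\omega(h_i-h_R)}\,d_i^{-\alpha},\qquad d_i=\sqrt{r_i^2+(h_i-h_R)^2}.
\end{equation*}

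First I would convert the SNR-threshold event into a distance-threshold event. Solving $\gamma_i=\gamma_0$ for $d_i$ and freezing the slowly varying LoS factor $\omega(h_i-h_R)/(\rho r_i+\omega(h_i-h_R))$ at the mean position $(\mu_{r_i},\mu_{h_i})$---justified because $\sigma$ is small relative to the link geometry, the same regime invoked in Remark 3---returns exactly the critical radius $\tau_i$ stated in the theorem. Hence $\{\gamma_i\ge\gamma_0\}\approx\{d_i\le\tau_i\}$ and $\mathbb{P}(\gamma_i\ge\gamma_0)\approx 1-\mathbb{P}(d_i>\tau_i)$.

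Then I would evaluate $\mathbb{P}(d_i>\tau_i)$ from the Gaussian position model. Invoking the earlier approximation \eqref{eq:Gaussian_r} that $r_i\sim N(\mu_{r_i},\sigma^2)$, together with $h_i-h_R\sim N(\mu_{h_i}-h_R,\sigma^2)$, and treating the two as independent Gaussians of common variance $\sigma^2$, the 3D distance $d_i=\sqrt{r_i^2+(h_i-h_R)^2}$ is the Euclidean norm of a two-component Gaussian vector and is therefore approximately Rice distributed with noncentrality parameter $\mu_{d_i}=\sqrt{\mu_{r_i}^2+(\mu_{h_i}-h_R)^2}$ and scale $\sigma$. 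Since the survival function of a Rician variable is precisely a first-order Marcum Q-function, $\mathbb{P}(d_i>\tau_i)=Q_1(\mu_{d_i}/\sigma,\tau_i/\sigma)$ with $Q_1$ as in \eqref{eq:Q}. Substituting back and multiplying by $\mathbb{P}(C_i)$ yields the claimed expression \eqref{eq:P_out11}.

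The hard part will be justifying the first approximation: decoupling the LoS-probability factor from the path-loss term so that $\{\gamma_i\ge\gamma_0\}$ collapses to the clean distance threshold $\{d_i\le\tau_i\}$. This requires arguing that freezing the ratio $\omega(h_i-h_R)/(\rho r_i+\omega(h_i-h_R))$ at its mean contributes only negligible error over the support where $d_i\approx\tau_i$. A secondary subtlety is justifying that $r_i$ and $h_i$ may be treated as independent Gaussians of equal variance, which is what makes the exact Rice/Marcum-Q identity available; both approximations are expected to be tight precisely in the small-$\sigma$ hovering regime assumed throughout.
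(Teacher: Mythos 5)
Your proposal is correct and follows essentially the same route as the paper's Appendix D: factor out $\mathbb{P}(C_i)$, freeze the LoS factor $\beta_0\mathbb{P}^{\rm LoS}_i$ at its mean value $\beta_0(1-\mu_{\phi_i})$ so that all randomness collapses into $d_i$, recognize $d_i$ as Rician with parameters $(\mu_{d_i},\sigma)$, and read off the Marcum-$Q$ survival function at the critical radius $\tau_i$. The only difference is that the paper backs the "freezing" step with an explicit numerical argument showing $\beta_0^2\sigma_{\phi_i}^2\ll\beta_0(1-\mu_{\phi_i})$ under the stated system parameters, which is precisely the justification you flagged as the hard part.
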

\begin{proof}
The proof is given in Appendix D.
\end{proof}
Since  Marcum Q-function is a standard function
 that is easy to compute\cite{20213DChannel},
using Theorem \ref{Theorem2}, 
   we can quickly  evaluate the QoS of   the  
considered   system
  without resorting to time-consuming simulation, especially the impact of UAV position fluctuations on the coverage probability. In addition, 
when an open area communication scenario is considered, 
the coverage  probability  can be obtained   by setting  $\mathbb{P}(C_i)\!=\!1\!-\!\frac{\theta}{2\pi}$,  the  reason is   consistent with the proof of  Corollary \ref{Corollary1}.
\subsection{Coverage Probability for Multiple UAVs Case}
Similar to $\mathbb{P}_{\mathrm {rel}}^{\mathrm{mul}}$,
 we define $\mathbb{P}_{\mathrm {cov}}^{\mathrm{mul}}$ as the coverage probability for the multiple UAVs case, it represents the probability that at least one UAV is available with a SNR not smaller  than the   threshold $\gamma_0$.
Then, we can obtain   $\mathbb{P}_{\mathrm {cov}}^{\mathrm{mul}}$ as follows:
\begin{Theorem}  \label{Theorem_cov_multiple}
The  coverage probability  for the multiple UAVs case is given by 
\begin{align}  \label{eq:Theorem_cov_multiple}
  \mathbb{P}_{\mathrm {cov}}^{\mathrm{mul}}\approx&1\!-\!\mathbb{P}_N(0)\!-\!\sum_{n=1}^{\infty} 
    \mathbb{P}_N(n)\prod_{i=1}^{n}Q_1\left(\frac{\mu_{d_i}}{\sigma},\frac{\tau_i}{\sigma}\right),
\end{align}
where  $\tau_i={\sqrt[{\alpha}]{\frac{P_t\beta_0\omega(\mu_{h_i}\!-\!h_R)}{N_0\gamma_0({
\rho \mu_{r_i}+\omega(\mu_{h_i}\!-\!h_R)})}
}}$, $i=1,\cdots,n$.
\end{Theorem}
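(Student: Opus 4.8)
The plan is to proceed exactly as in the proof of Theorem \ref{Theorem_multiple}, substituting the per-link SNR-outage event for the per-link dynamic-blockage event. First I would start from the coverage definition \eqref{eq:cov} and peel off the $n=0$ term; by the stated convention $\mathbb{P}(\gamma_i<\gamma_0)|_{i=0}=1$, this term contributes exactly $\mathbb{P}_N(0)$, accounting for the event that no UAV is available. The remaining sum then runs over $n\ge 1$, with the inner product effectively taken over the $n$ available links, $i=1,\dots,n$.

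The key input I would supply is the single-link outage probability, obtained by matching two expressions already at hand. From the single-UAV setup \eqref{eq:Pcov_single} we have $\mathbb{P}_{\mathrm{cov}}^{\mathrm{sig}}=\mathbb{P}(C_i)\,\mathbb{P}(\gamma_i\ge\gamma_0)$, while Theorem \ref{Theorem2} gives $\mathbb{P}_{\mathrm{cov}}^{\mathrm{sig}}\approx\mathbb{P}(C_i)\left(1-Q_1\left(\frac{\mu_{d_i}}{\sigma},\frac{\tau_i}{\sigma}\right)\right)$. Comparing the two yields $\mathbb{P}(\gamma_i\ge\gamma_0)\approx 1-Q_1\left(\frac{\mu_{d_i}}{\sigma},\frac{\tau_i}{\sigma}\right)$, and hence $\mathbb{P}(\gamma_i<\gamma_0)\approx Q_1\left(\frac{\mu_{d_i}}{\sigma},\frac{\tau_i}{\sigma}\right)$. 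Inserting this per-link factor into the product in \eqref{eq:cov} immediately delivers the claimed expression for $\mathbb{P}_{\mathrm{cov}}^{\mathrm{mul}}$.

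Two points require brief justification. First, factorizing the joint no-coverage probability across links into $\prod_{i=1}^{n}\mathbb{P}(\gamma_i<\gamma_0)$ relies on the (approximate) independence of the links, which I would attribute to the underlying PPP model and the availability thinning established in Lemma \ref{Theorem_avai}. Second, unlike the reliable-service bound in Corollary \ref{Corollary2}, the factor $Q_1(\mu_{d_i}/\sigma,\tau_i/\sigma)$ genuinely depends on $i$ through $\mu_{r_i}$ and $\mu_{h_i}$, so the product does not collapse into a single power and must be retained as a product over the $n$ links.

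The main obstacle is \emph{not} in this theorem, which is essentially a bookkeeping combination of the definition with the single-link result; it is inherited from Theorem \ref{Theorem2} (Appendix D). There, one must establish that $\mathbb{P}(\gamma_i<\gamma_0)$ is a Marcum $Q$-function: approximate $r_i$ as Gaussian so that $d_i=\sqrt{r_i^2+(h_i-h_R)^2}$ is the norm of a two-dimensional Gaussian vector with nonzero mean (a Rician variable), whose CDF is $1-Q_1(\mu_{d_i}/\sigma,\cdot/\sigma)$, and then invert the threshold $\gamma_i\ge\gamma_0$ into a distance condition $d_i\le\tau_i$. Granting that result, the present proof is a direct substitution.
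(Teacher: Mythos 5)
Your proposal is correct and follows essentially the same route as the paper: the paper's proof likewise starts from the definition in \eqref{eq:cov}, separates the $n=0$ term as $\mathbb{P}_N(0)$, and substitutes $\mathbb{P}(\gamma_i<\gamma_0)=1-\mathbb{P}(\gamma_i\ge\gamma_0)\approx Q_1\left(\frac{\mu_{d_i}}{\sigma},\frac{\tau_i}{\sigma}\right)$ obtained from \eqref{eq:Pcov_single} and \eqref{eq:P_out11}. Your added remarks on link independence and on the genuine $i$-dependence of the Marcum $Q$ factors are consistent with (though left implicit in) the paper's argument.
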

\begin{proof}
In this case, $\mathbb{P}_{\mathrm {cov}}^{\mathrm{mul}}$ is expressed as shown in \eqref{eq:cov}. Therefore, we can get 
\begin{align}  \label{eq:P_cov_mul}
    \mathbb{P}_{\mathrm {cov}}^{\mathrm{mul}}=&1-\mathbb{P}_N(0)\!-\!\sum_{n=1}^{\infty} \mathbb{P}_N(n)\prod_{i=1}^{n}\mathbb{P} \left(\gamma_{i} <\gamma_0 \right)
        \nonumber \\
     \approx&1\!-\!\mathbb{P}_N(0)\!-\!\sum_{n=1}^{\infty} 
    \mathbb{P}_N(n)\prod_{i=1}^{n}Q_1\left(\frac{\mu_{d_i}}{\sigma},\frac{\tau_i}{\sigma}\right),
\end{align}
where  $\mathbb{P} \left(\gamma_{i} <\gamma_0 \right)\!=\!1\!-\mathbb{P} \left(\gamma_{i} \ge\gamma_0 \right)\approx Q_1\left(\frac{\mu_{d_i}}{\sigma},\frac{\tau_i}{\sigma}\right)$ with the aid of \eqref{eq:Pcov_single} and \eqref{eq:P_out11}.
\end{proof}
Similar to \eqref{eq:P_rel2}, given the limited number of UAVs in practice, $\mathbb{P}_{\mathrm {cov}}^{\mathrm{mul}}$ 
can be  rewritten as
 \begin{align}  \label{eq:P_cov_mul2}
    \mathbb{P}_{\mathrm {cov}}^{\mathrm{mul}}
     \approx&1\!-\!\mathbb{P}_N(0)\!-\!\sum_{n=1}^{K} 
\mathbb{P}_N(n)\prod_{i=1}^{n}Q_1\left(\frac{\mu_{d_i}}{\sigma},\frac{\tau_i}{\sigma}\right).
\end{align}
Furthermore,  a  lower  bound of $\mathbb{P}_{\mathrm {cov}}^{\mathrm{mul}}$ is  obtained as shown in Corollary \ref{Corollary_cov}.
\begin{Corollary} \label{Corollary_cov}
A  lower  bound  of $\mathbb{P}_{\mathrm {cov}}^{\mathrm{mul}}$  is given by 
 \begin{align}  \label{eq:Corollary_cov}
     \mathbb{P}_{\mathrm {cov}}^{\mathrm{mul}}\ge1\!-\!\exp\left({\!-\mathbb{P}(C_i)\lambda_T\pi{R}^{2}\left(1-Q_1\left(\frac{\mu_{d}}{\sigma},\frac{\tau}{\sigma}\right)\right)}\right),
\end{align}
where $\mu_d=\max(\mu_{d_i})$ and $\tau=\min(\tau_i)$, $i=1,\cdots,n$. 
\end{Corollary}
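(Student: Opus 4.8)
The plan is to mirror the argument used for the reliable-service lower bound in Corollary \ref{Corollary2}, now transported to the SNR-based coverage metric. I would start from the approximate multiple-UAV expression \eqref{eq:Theorem_cov_multiple} of Theorem \ref{Theorem_cov_multiple}, in which the probability of \emph{not} being in coverage is $\mathbb{P}_N(0)+\sum_{n=1}^{\infty}\mathbb{P}_N(n)\prod_{i=1}^{n}Q_1(\mu_{d_i}/\sigma,\tau_i/\sigma)$. Since this quantity is subtracted from $1$, a lower bound on $\mathbb{P}_{\mathrm{cov}}^{\mathrm{mul}}$ follows from an \emph{upper} bound on each product of Marcum $Q_1$ factors, which is exactly the reduction I would carry out first.

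The key step is the monotonicity of the Marcum $Q$-function: $Q_1(a,b)$ is increasing in its first argument $a$ and decreasing in its second argument $b$. Because $\sigma>0$ is fixed, the factor $Q_1(\mu_{d_i}/\sigma,\tau_i/\sigma)$ is maximized by taking the largest feasible first argument together with the smallest feasible second argument, i.e.\ by replacing $\mu_{d_i}$ with $\mu_d=\max(\mu_{d_i})$ and $\tau_i$ with $\tau=\min(\tau_i)$. This yields $\prod_{i=1}^{n}Q_1(\mu_{d_i}/\sigma,\tau_i/\sigma)\le\big(Q_1(\mu_d/\sigma,\tau/\sigma)\big)^{n}$ for every $n$, and after folding the $n=0$ term (whose factor is $1$) back into the sum I obtain
\begin{align}
\mathbb{P}_{\mathrm{cov}}^{\mathrm{mul}}\ge 1-\sum_{n=0}^{\infty}\mathbb{P}_N(n)\Big(Q_1\big(\tfrac{\mu_d}{\sigma},\tfrac{\tau}{\sigma}\big)\Big)^{n}.
\end{align}

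Finally, I would substitute the Poisson law from Lemma \ref{Theorem_avai}. Writing $\Lambda=\mathbb{P}(C_i)\lambda_T\pi R^2$ and $q=Q_1(\mu_d/\sigma,\tau/\sigma)$, the series collapses through the exponential generating function,
\begin{align}
\sum_{n=0}^{\infty}\frac{\Lambda^{n}}{n!}e^{-\Lambda}q^{n}=e^{-\Lambda}e^{\Lambda q}=e^{-\Lambda(1-q)},
\end{align}
which reproduces exactly \eqref{eq:Corollary_cov}. The main obstacle is the monotonicity assignment in the second paragraph: one must invoke the correct directions for $Q_1(a,b)$ and match them to the max/min choices, since flipping either direction would invert the bound. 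These facts are standard and also agree with physical intuition---a more distant UAV (larger $\mu_{d_i}$) raises the outage probability while a larger $\tau_i$ lowers it---so once they are stated the remaining manipulations are the same Poisson summation already used in Corollary \ref{Corollary2}, and the result follows.
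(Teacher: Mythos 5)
Your proposal is correct and follows essentially the same route as the paper: both invoke the monotonicity of $Q_1(a,b)$ (increasing in $a$, decreasing in $b$) to replace each factor by $Q_1(\mu_d/\sigma,\tau/\sigma)$ with $\mu_d=\max(\mu_{d_i})$ and $\tau=\min(\tau_i)$, then collapse the resulting Poisson series into the exponential form of \eqref{eq:Corollary_cov}. Your explicit generating-function computation merely spells out the final substitution step that the paper states more tersely.
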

\begin{proof}
According to the characteristics of   Marcum
Q-function, we know that $Q_1(a,b)$ is an increasing function of $a$ and a decreasing function of $b$. Therefore, $\mathbb{P}_{\mathrm {cov}}^{\mathrm{mul}}$ in \eqref{eq:Theorem_cov_multiple} can achieve the lower bound
when  $\mu_{d_1}\!=\!\cdots\!=\!\mu_{d_n}=\max(\mu_{d_i})$ and   $\tau_{1}\!=\!\cdots\!=\!\tau_{n}=\min(\tau_{i})$,
$i=1,\cdots,n$, i.e.,
\begin{align} 
    \mathbb{P}_{\mathrm {cov}}^{\mathrm{mul}}&\ge1-\sum_{n=0}^{\infty} \mathbb{P}_N(n)Q_1\left(\frac{\mu_{d}}{\sigma},\frac{\tau}{\sigma}\right)^n\nonumber \\
    &=1\!-\!\exp\left({\!-\mathbb{P}(C_i)\lambda_T\pi{R}^{2}\left(1-Q_1\left(\frac{\mu_{d}}{\sigma},\frac{\tau}{\sigma}\right)\right)}\right),
\end{align}
where the last step is obtained by  substituting \eqref{eq:{P}_{N}} 
into \eqref{eq:Theorem_cov_multiple}.
\end{proof}

\section{Simulation  Results}
This section performs extensive simulations to confirm the 
theoretical results. For the simulation, we consider that  the blockers are uniformly  distributed within a circular area centered at the typical user 
with a radius of $R=100$ m.
 The movement of the human blockers is generated by a    random  way-point mobility  model\cite{2016Handover}, where human  blockers will randomly select a direction  and walk in that direction at the speed of 1 m$/$s,
 the time of moving in that direction is  uniformly distributed for [0, 60] seconds. 
The detailed simulation parameters are given in Table II.

\begin{table} \label{2}
\caption{\rule{0pt}{12pt}Simulation Parameter Values}
\renewcommand\arraystretch{1}
\begin{center}
\begin{tabular}{|l|l|}
\hline \;\;\;\;\;\;\;\;\;\;\;\;\;\;\;\;\;\;\textbf{Parameters}              &\;\;\;\textbf{\;\;\;Values}   \\
\hline Height of the  user/blockers, $h_R$/$h_B$         &1.4/1.8 m \cite{2019The} \\
\hline Density of  moving  blockers, $\lambda_B$          &0.01, 0.02 bl$/$m$^2$ \\
\hline Density of  static buildings, $\lambda_S$          &100 sbl$/$km$^2$\cite{2019The} \\
\hline Non-blocking rate, $\omega$ &$2$ bl$/$s \cite{2019The}\\
\hline  Fluctuation strength of  UAVs,   $\sigma$ &0 to 0.2 m\cite{2018Channel}\\
\hline  Transmit power of UAV, $P_t$        &$20$ dBm  \\
\hline    Noise power, $N_0$         &$-110$ dBm \cite{20213D} \\
\hline   SNR threshold, $\gamma_0$          &$3$ dB\cite{2018Flexible} \\
\hline   Size of  static buildings, $\mathbb{E}(l)\times\mathbb{E}(w)$          &$10$ m$\times10$ m\cite{2014Analysis}  \\
\hline  Path loss parameters,  $\alpha$, $\beta_{0}$ &2, $7\!\times\!10^{\!-5}$  \cite{20213D}\\
\hline
\end{tabular}\label{table:SFDs}
\end{center}
\end{table}

\subsection{Analysis of Reliable Service Probability}
\subsubsection{Single UAV case} Fig. \ref{fig:single_effect_sigma_Prel_densitys} 
 visualizes  the impact of  $\sigma$ on $\mathbb{P}_{\mathrm {rel}}^{\mathrm{sig}}$.  First, as expected, the higher the $\sigma$ is, the lower the $\mathbb{P}_{\mathrm {rel}}^{\mathrm{sig}}$ is.
Therefore, we  can get the higher the degree of UAV position fluctuations,
 the worse the system reliability.
 Then,  under the same $\sigma$, the higher the density  of blockers is, the smaller the $\mathbb{P}_{\mathrm {rel}}^{\mathrm{sig}}$ is. This is reasonable because the  fewer blockers on the UAV-user link, the smaller the blockage probability  and the higher the reliability of the link.
   Therefore,  the QoS in the scenario with static buildings is worse than in the open scenario without. 
  However, in the low $\sigma$ region ($\sigma<0.05$), even for different  $\lambda_B$, the value of $\mathbb{P}_{\mathrm {rel}}^{\mathrm{sig}}$ is the same and hardly changes with $\sigma$.
  This happens because under our simulation parameter settings, in a low $\sigma$ region, $\mathbb{P}_{\mathrm {rel}}^{\mathrm{sig}}\approx \mathbb{P}(C_i)$ and is independent of $\lambda_B$ and $\sigma$. The details are  described as: 
 Using \eqref{eq:P(B|C)}, we can get $\sigma_{\phi_i}^2$ is close to 0 when $\sigma$ is small. For example,
  when $\sigma\!=\!0.05$, $\lambda_B\!=\!0.02$ and other parameters given in 
  Table II, $\sigma_{\phi_i}^2\!\approx\!3.4\!\times\!10^{-11}$, so the fluctuation of  $\phi_i(h_i,r_i)$ is extremely small and  $\phi_i(h_i,r_i)$ is closely around its mean value $\mu_{\phi_i}$. Therefore,  the probability that $\phi_i(h_i,r_i)$ is less than the  threshold $p_{\mathrm{th}}$ is close to 1 since  $p_{\mathrm{th}}\!>\! \mu_{\phi_i}$,
 i.e.,  $\mathbb{P}(\phi_i(h_i,r_i)\!\le p_{\mathrm{th}})\!\approx\! \frac{1}{2}\!+\!\frac{1}{2}\mathrm{erf}\left(\frac{p_{\mathrm{th}}\! -\!\mu_{\phi_i}}{\sqrt{2}\sigma_{\phi_i}}\right)\!\approx\!1$, and   the value of $\mathbb{P}_{\mathrm {rel}}^{\mathrm{sig}}$ mainly depends on 
 $\mathbb{P}(C_i)$.
  
  \begin{figure}
\centering
  \includegraphics[scale=0.54]{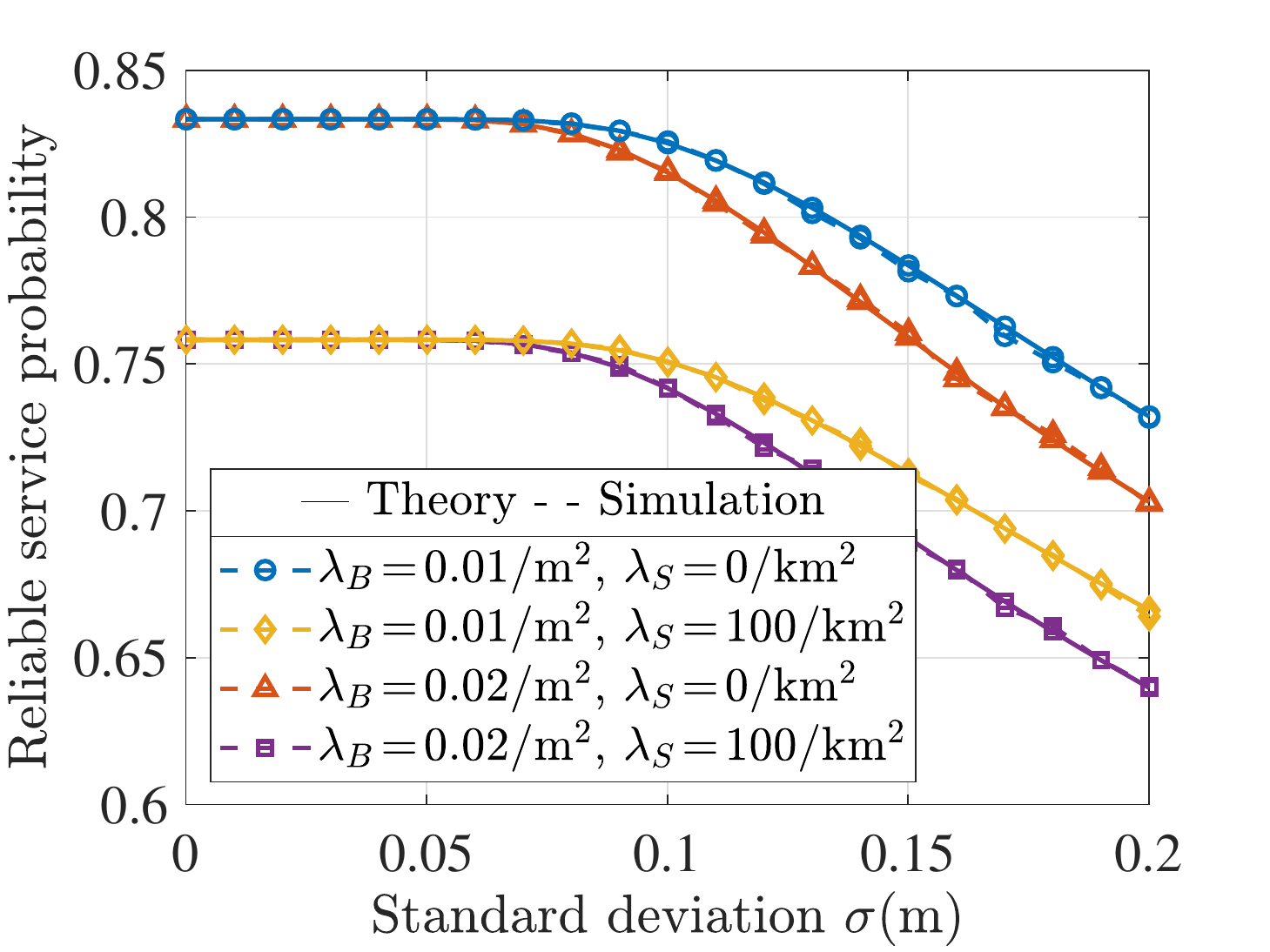 }\\ \caption{Reliable service probability $\mathbb{P}_{\mathrm {rel}}^{\mathrm{sig}}$ of the single UAV case 
    versus  UAV position fluctuation   strength $\sigma$  for  different blocker densities, where 
   $\mu_{h_i}\!=\!25$ m, $\mu_{r_i}\!=\!10$ m, $\theta\!=\!\pi/3$, and $p_{\mathrm{th}}=0.001$.
  As we can see,    the larger the $\sigma$ is, the smaller the $\mathbb{P}_{\mathrm {rel}}^{\mathrm{sig}}$ is.  }\label{fig:single_effect_sigma_Prel_densitys}
\end{figure}

 \begin{figure}
\centering
  \includegraphics[scale=0.54]{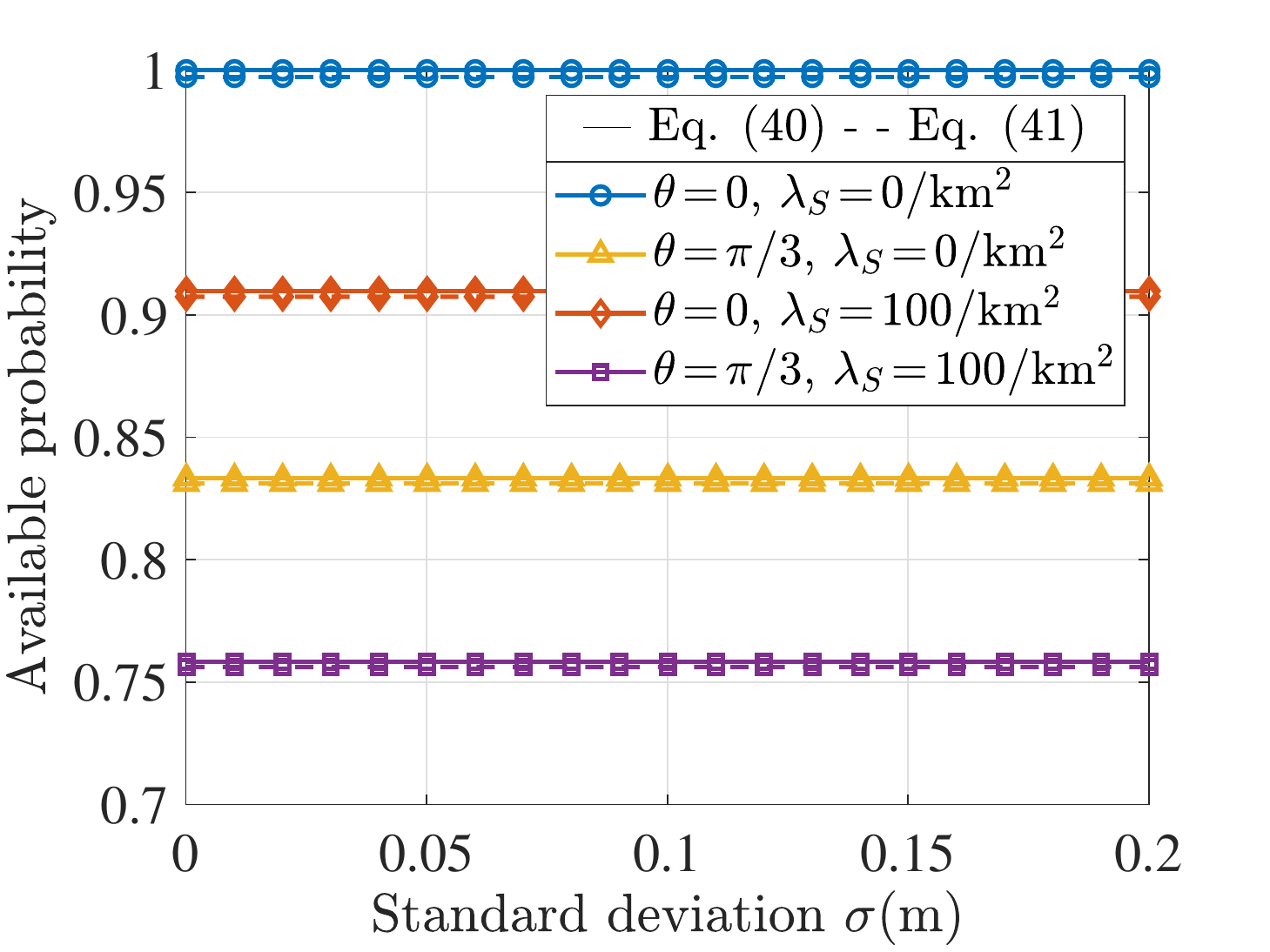}\\ \caption{Available probability $\mathbb{P}(C_i)$ versus  UAV position fluctuation   strength $\sigma$. As we can see,  the value of $\mathbb{P}(C_i)$ in \eqref{eq:C_i} and \eqref{eq:C_i_appro} is almost the same. This not only  indicates that
  $\sigma$ has little impact  on $\mathbb{P}(C_i)$, but also proves that
  the impact of $\sigma$ on the static blockage is negligible.
  Moreover,  the  curves  with $\theta=\pi/3$ are actually the $\mathbb{P}_{\mathrm {rel}}^{\mathrm{sig}}$ (when $\sigma<0.05$) 
  in Fig. \ref{fig:single_effect_sigma_Prel_densitys}.
  }\label{fig:P_ci}
\end{figure}

\begin{figure}
\centering
  \includegraphics[scale=0.54]{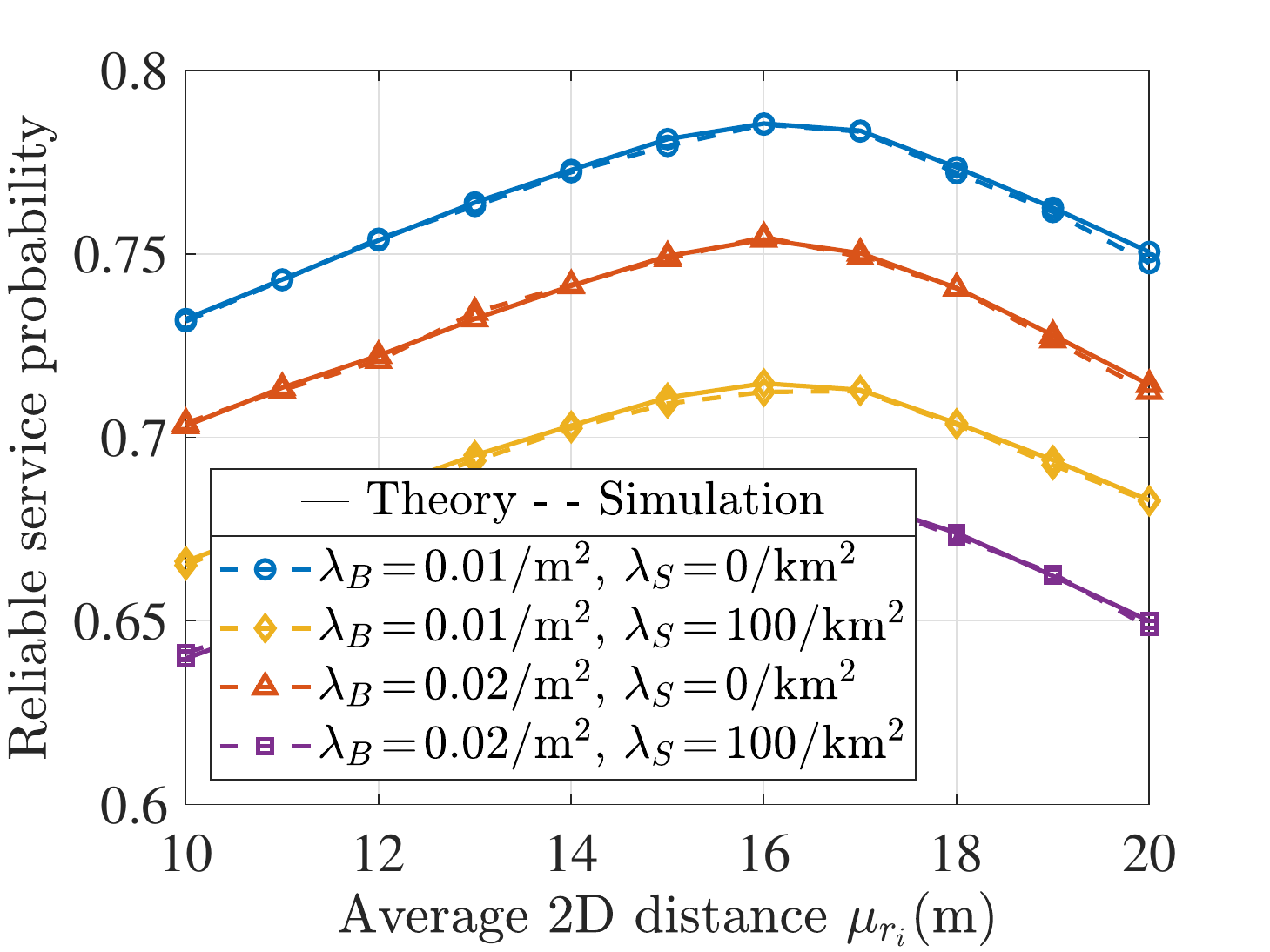}\\ \caption{
  Reliable service probability $\mathbb{P}_{\mathrm {rel}}^{\mathrm{sig}}$ of the  single UAV case
   versus average 2D  distance $\mu_{r_i}$ from the UAV to the user,  where  $\sigma\!=\!0.2$ m   and other parameters are the same as in  Fig. \ref{fig:single_effect_sigma_Prel_densitys}. As we can see, 
   there exists an optimal $\mu_{r_i}$
to  maximize  
  $\mathbb{P}_{\mathrm {rel}}^{\mathrm{sig}}$.
   }\label{fig:single_effect_sigma_Prel_densitys_higer}
\end{figure} 

 Fig.  \ref{fig:P_ci} depicts 
  the curves of $\mathbb{P}(C_i)$ given in 
  \eqref{eq:C_i} and \eqref{eq:C_i_appro}, respectively. First,
it is clear that  the impact of $\sigma$ on $\mathbb{P}(C_i)$ is negligible.
Then,  the
 curves with $\theta=\pi/3$
 are actually the $\mathbb{P}_{\mathrm {rel}}^{\mathrm{sig}}$ (when $\sigma\!<\!0.05$) 
  in Fig. \ref{fig:single_effect_sigma_Prel_densitys}. 
Therefore, at a low $\sigma$ region, $\mathbb{P}_{\mathrm {rel}}^{\mathrm{sig}}\approx \mathbb{P}(C_i)$ and $\sigma$ has little impact on it.

Fig. \ref{fig:single_effect_sigma_Prel_densitys_higer}
shows  how the average 2D  distance $\mu_{r_i}$ from the UAV to the typical user impacts $\mathbb{P}_{\mathrm {rel}}^{\mathrm{sig}}$. As we can see, there exists an optimal $\mu_{r_i}$ (about 16 m in our simulation scenario) for maximizing   $\mathbb{P}_{\mathrm {rel}}^{\mathrm{sig}}$. 
According to Corollary \ref{Corollary4} and Theorem 2,
on the one hand, we know that the greater the $\mu_{r_i}$ is, the smaller the $\sigma_{\phi_i}$ is, so the larger the  $\mathbb{P}_{\mathrm {rel}}^{\mathrm{sig}}$ is.
On the other hand, however, $\mu_{\phi_i}$ is  an increasing function of $\mu_{r_i}$, and  the greater the $\mu_{\phi_i}$ is, the smaller the $\mathbb{P}_{\mathrm {rel}}^{\mathrm{sig}}$ is. Therefore,  these two
factors  lead to the existence of an optimum $\mu_{r_i}$ for  maximizing    $\mathbb{P}_{\mathrm {rel}}^{\mathrm{sig}}$.

\begin{figure}
\centering
  \includegraphics[scale=0.54]{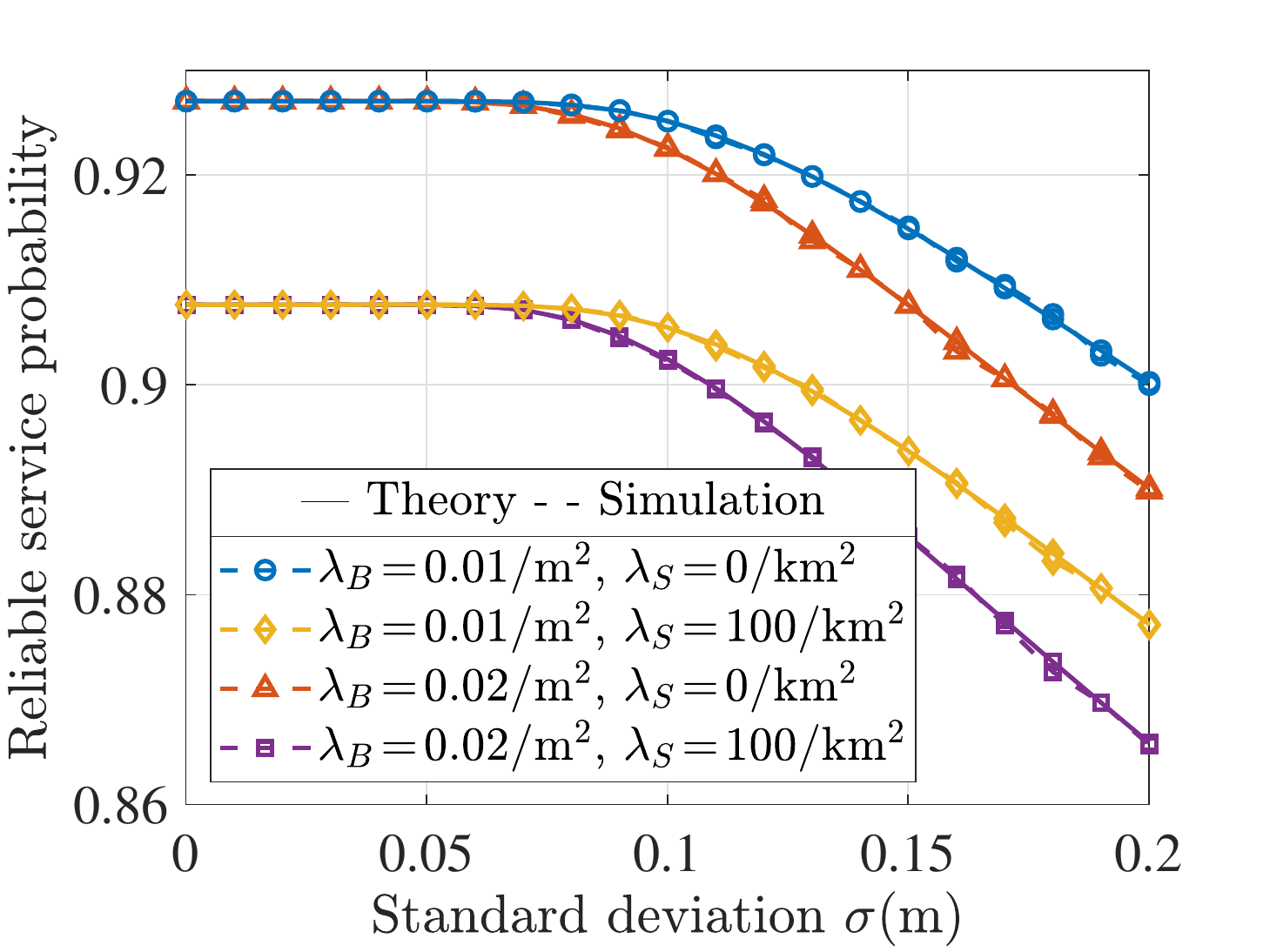}\\ \caption{Reliable service probability $\mathbb{P}_{\mathrm {rel}}^{\mathrm{mul}}$  of the multiple UAVs case 
  versus  UAV position fluctuation   strength $\sigma$  for  different blocker densities,  where  
  $\lambda_T\!=\!100/$km$^2$,
     $\mu_{h_i}\!=\!25$ m,  $\min(\mu_{r_i})\!=\!10$ m,  $\max(\mu_{r_i})\!=\!15$ m, $i=1,\cdots,n$, $\theta=\pi/3$, and $p_{\mathrm{th}}\!=\!0.001$.
  As we can see,    the larger the $\sigma$ is, the smaller the $\mathbb{P}_{\mathrm {rel}}^{\mathrm{mul}}$ is.}\label{fig:multiple_effect_sigma_Prel_densitys}
\end{figure} 

\begin{figure}
\centering
  \includegraphics[scale=0.54]{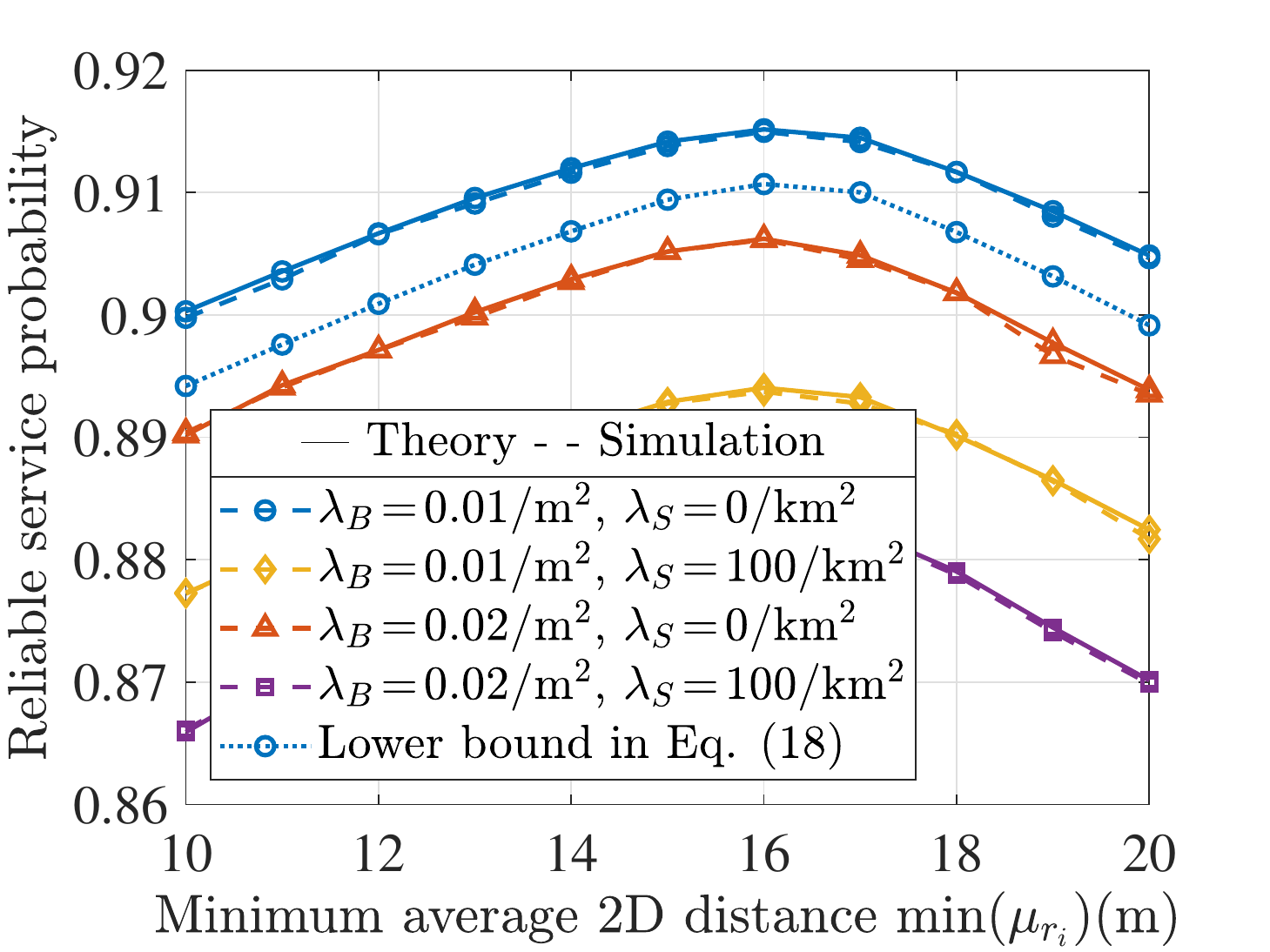}\\ \caption{Reliable service probability $\mathbb{P}_{\mathrm {rel}}^{\mathrm{mul}}$ of the multiple UAVs case  
   versus minimum average 2D  distance $\min(\mu_{r_i})$, $i\!=\!1,\cdots,n$
  from the UAV to the user  with  different blocker densities,  where  $\sigma=0.2$ m   and other parameters are the same as in  Fig. \ref{fig:multiple_effect_sigma_Prel_densitys}. As we can see, there exists an optimal $\min(\mu_{r_i})$
for maximizing  $\mathbb{P}_{\mathrm {rel}}^{\mathrm{mul}}$, and a lower bound for $\mathbb{P}_{\mathrm {rel}}^{\mathrm{mul}}$ 
in 
 \eqref{eq:Corollary2} is given.  }\label{fig:multiple_effect_sigma_Prel_densitys_higer}
\end{figure}

\subsubsection{Multiple UAVs case}
 Fig. \ref{fig:multiple_effect_sigma_Prel_densitys} and
Fig. \ref{fig:multiple_effect_sigma_Prel_densitys_higer} plot
 reliable service probability  $\mathbb{P}_{\mathrm {rel}}^{\mathrm{mul}}$  according to \eqref{eq:P_rel2}, where $K=6$.  As we can see, the larger the $\sigma$ is, the smaller the $\mathbb{P}_{\mathrm {rel}}^{\mathrm{mul}}$ is. Therefore, the influence of UAV position fluctuations  on the QoS of this case is the same as that of  the single UAV case. However, comparing these two cases,
it is obvious that $\mathbb{P}_{\mathrm {rel}}^{\mathrm{mul}}\!>\mathbb{P}_{\mathrm {rel}}^{\mathrm{sig}}$.
 This mainly benefits from the  multi-connectivity 
 strategy, i.e.,
 when the current link is blocked, the user can quickly  switch to other available UAVs to reduce the blockage. 
 Therefore,  using  multiple UAVs can   alleviate the impact of  blockage on the QoS.
 Then, similar to  the single UAV case, 
  when $\sigma$ is small ($\sigma\!<\!0.05$), we can infer $\frac{1}{2}+\frac{1}{2}\mathrm{erf}\left(\frac{p_{\mathrm{th}}\! -\!\mu_{\phi_i}}{\sqrt{2}\sigma_{\phi_i}}\right)\!\approx\!1$, so $\frac{1}{2}-\frac{1}{2}\mathrm{erf}\left(\frac{p_{\mathrm{th}}\! -\!\mu_{\phi_i}}{\sqrt{2}\sigma_{\phi_i}}\right)\!\approx\!0$,
  and  $\mathbb{P}_{\mathrm {rel}}^{\mathrm{mul}}=1\!-\!\exp\left({\!-\mathbb{P}(C_i)\lambda_T\pi{R}^{2} }\right)$.
  This indicates that in the considered  scenario,   $\mathbb{P}_{\mathrm {rel}}^{\mathrm{mul}}$
  is independent of $\lambda_B$ and $\sigma$ at a low $\sigma$ region. Noted that in this case, we assume that  all UAVs are distributed on a ring with an inner diameter of $\min(\mu_{r_i})$ an outer diameter of $\min(\mu_{r_i})\!+\!5$ m. 
 Fig. \ref{fig:multiple_effect_sigma_Prel_densitys_higer}
  studies the impact of $\min(\mu_{r_i})$ on   $\mathbb{P}_{\mathrm {rel}}^{\mathrm{mul}}$,  it is clear that an optimal $\min(\mu_{r_i})$   exists for
 maximizing $\mathbb{P}_{\mathrm {rel}}^{\mathrm{mul}}$.
 The reason is a combination of two factors, as described in the analysis of Fig. \ref{fig:single_effect_sigma_Prel_densitys_higer}.  In addition,  Fig. \ref{fig:multiple_effect_sigma_Prel_densitys_higer}
  compares $\mathbb{P}_{\mathrm {rel}}^{\mathrm{mul}}$  and a lower bound of $\mathbb{P}_{\mathrm {rel}}^{\mathrm{mul}}$  in  \eqref{eq:Corollary2}
  for $\lambda_B\!=\!0.01$ and $\lambda_S\!=\!0$. It can be observed that the value of $\mathbb{P}_{\mathrm {rel}}^{\mathrm{mul}}$   is slightly greater than the lower bound.


 \subsection{Analysis of Coverage Probability }
\subsubsection{Single UAV case}
Fig. \ref{fig:effect_sigma_Pout_densitys.eps}  studies  the effect of  $\sigma$ on the  coverage probability ${P}_{\mathrm{cov}}^{\mathrm{sig}}$ for the 
single UAV case. First, we  observe that  ${P}_{\mathrm{cov}}^{\mathrm{sig}}$ decreases as $\sigma$ increases. 
 This happens because strong fluctuations in the UAV position will result in strong fluctuations in the  SNR $\gamma_{i}$, which increases the probability that $\gamma_{i}$ will fall below the  given threshold $\gamma_0$.  Therefore,  
 ${P}_{\mathrm{cov}}^{\mathrm{sig}}$
  decreases with the increase of $\sigma$.
Then,  it is clear that in the low $\sigma$ region ($\sigma<0.05$), even for different   $\lambda_B$, the value of $\mathbb{P}_{\mathrm {cov}}^{\mathrm{sig}}$ is the same and hardly changes with $\sigma$.
This happens because
in a low $\sigma$ region, the effect of $\sigma$ on the  $\gamma_{i}$   is  negligible. Therefore,
the probability that  $\gamma_{i}$ is less than  $\gamma_0$ is close to 0, as $\gamma_{i}$ is slightly greater than $\gamma_0$  in our simulation settings, i.e.,
$\mathbb{P} \left(\gamma_{i} \ge\gamma_0 \right)\approx1$. Therefore, $\mathbb{P}_{\mathrm {cov}}^{\mathrm{sig}}\approx \mathbb{P}(C_i)$ and is independent of $\lambda_B$ and $\sigma$ by using
\eqref{eq:Pcov_single} and \eqref{eq:C_i_appro}.

\begin{figure}
\centering
  \includegraphics[scale=0.54]{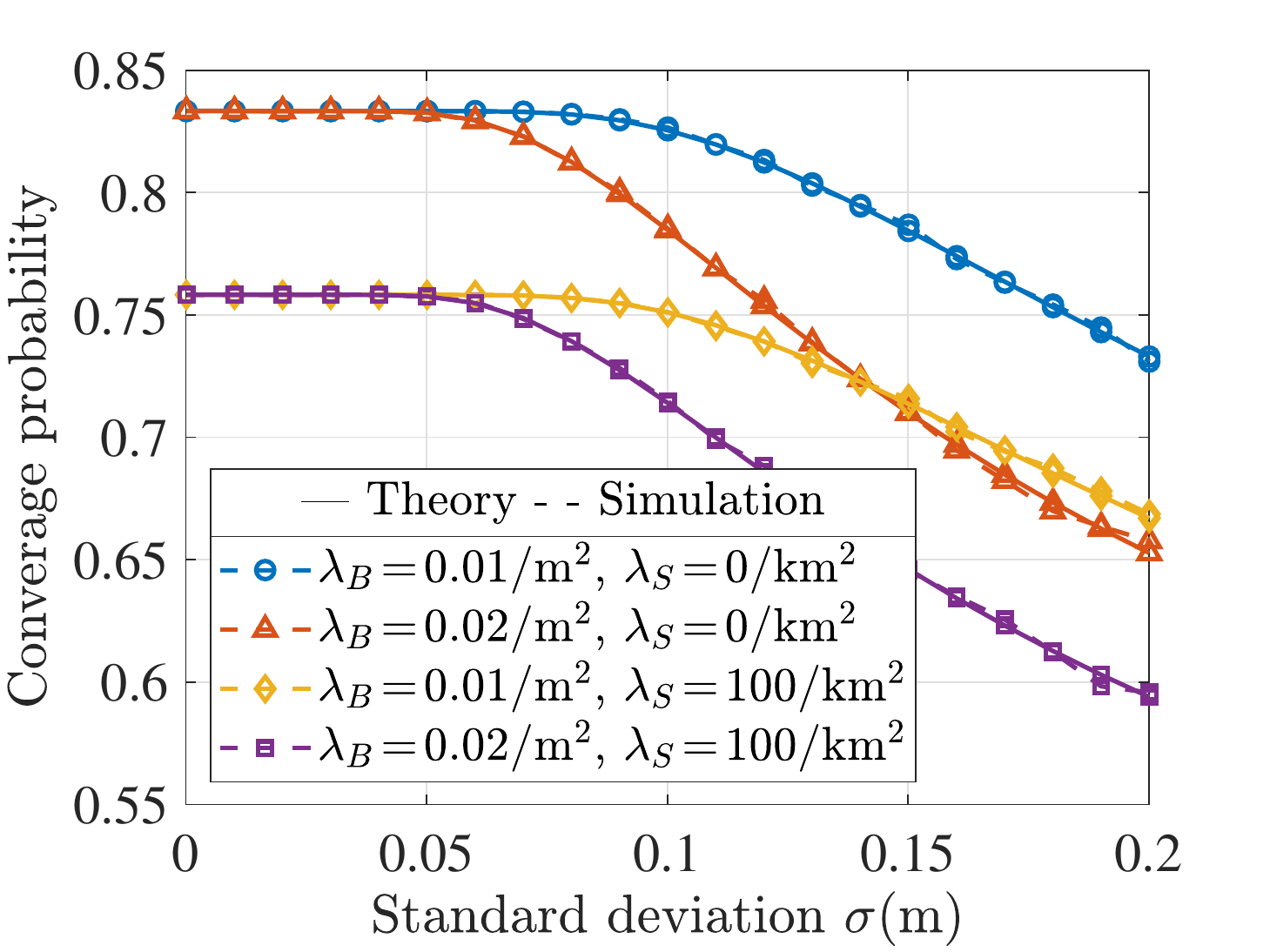}\\ \caption{
  Coverage probability $\mathbb{P}_{\mathrm {cov}}^{\mathrm{sig}}$ of the single UAV case 
  versus  UAV position fluctuation   strength $\sigma$  for  different blocker densities,  where 
    $\mu_{h_i}=25$ m, $\mu_{r_i}=50$ m, and $\theta=\pi/3$. As we can see,    the larger the $\sigma$ is, the smaller the $\mathbb{P}_{\mathrm {cov}}^{\mathrm{sig}}$ is.}\label{fig:effect_sigma_Pout_densitys.eps}
\end{figure}

\begin{figure}
\centering
  \includegraphics[scale=0.54]{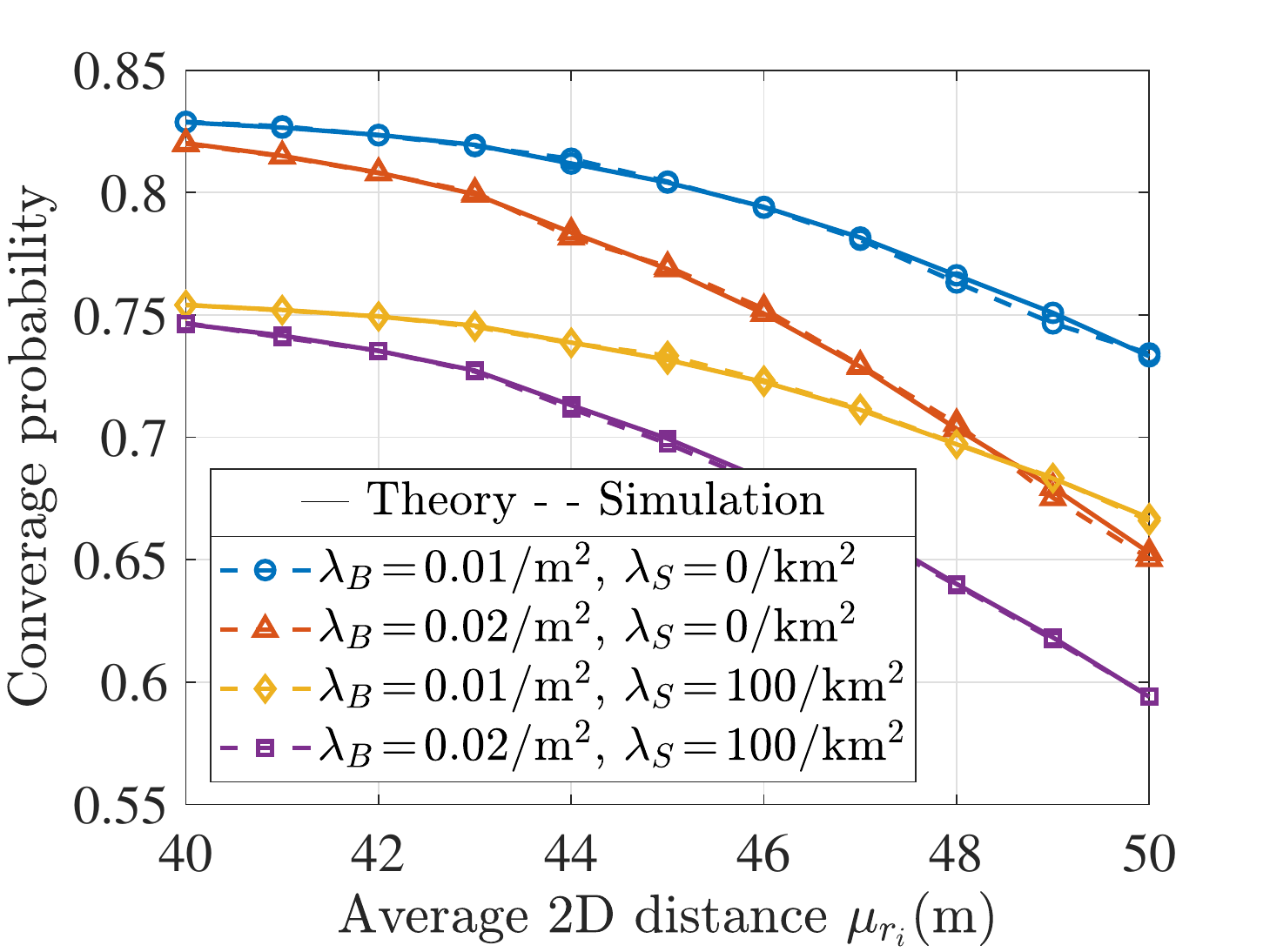}\\ \caption{
 Coverage probability $\mathbb{P}_{\mathrm {cov}}^{\mathrm{sig}}$ of the single UAV case 
  versus  average
  2D  distance $\mu_{r_i}$ from the UAV to the user,  where  $\sigma=0.2$ m   and other parameters are the same as in Fig.   \ref{fig:effect_sigma_Pout_densitys.eps}.  As we can see, the
   larger the $\mu_{r_i}$ is, the smaller the $\mathbb{P}_{\mathrm {cov}}^{\mathrm{sig}}$ is.
 }\label{fig:effect_sigma_Pout_densitys_low_r.eps}
\end{figure}

 Fig. \ref{fig:effect_sigma_Pout_densitys_low_r.eps} investigates the impact of    average 2D  distance $\mu_{r_i}$ from the UAV to the user    on   $\mathbb{P}_{\mathrm {cov}}^{\mathrm{sig}}$. It is clear that the larger the $\mu_{r_i}$  is, the smaller the $\mathbb{P}_{\mathrm {cov}}^{\mathrm{sig}}$ is. The reason is that increasing $\mu_{r_i}$  will increase the distance between transceivers and the blockage probability. 
 Therefore, 
 the SNR received by the user is reduced, and $\mathbb{P}_{\mathrm {cov}}^{\mathrm{sig}}$  is decreased  according to \eqref{eq:Pcov_single}. The result is also
 consistent with Theorem \ref{Theorem2}.
 Specifically,  $\mu_{d_i}=\sqrt{\mu_{r_i}^2+(\mu_{h_i}-h_R)^2}$ is an increasing function of  $\mu_{r_i}$, and 
 $Q_1\left(\frac{\mu_{d_i}}{\sigma},\frac{\tau_i}{\sigma}\right)$
increase with the increase of $\mu_{d_i}$
according to  the characteristics of Marcum Q-function. Therefore, $\mathbb{P}_{\mathrm {cov}}^{\mathrm{sig}}$ 
is a decreasing function of $\mu_{r_i}$.
  Additionally,
  combined with the  analysis of  reliable service probability in Fig. \ref{fig:single_effect_sigma_Prel_densitys_higer}, one  thing we found very interesting is that when $\mu_{r_i}$ is at a relatively low region, the user has a higher probability of getting reliable service and coverage. After all, selecting an appropriate $\mu_{r_i}$ can improve the service quality.
  
Fig. \ref{fig:min_Pout} shows   $\mathbb{P}_{\mathrm {cov}}^{\mathrm{sig}}$
as a function of average
  UAV height $\mu_{h_i}$. As  we expected, 
 there is  an optimal $\mu_{h_i}$ that  maximizes  $\mathbb{P}_{\mathrm {cov}}^{\mathrm{sig}}$.  Two factors cause the result:  
   On the one hand,  when the UAV is  at a low height, the LoS probability will be small. The resulting small channel gain $g_i$ increases the probability of  SNR falling below the given threshold,  thereby reducing coverage. On the other hand, for the UAV with high height, even though the link is in LoS, the increasing distance between the user and the UAV will also reduce the SNR of the link, thus leading to a low probability of coverage. Therefore, these two contributors eventually lead to the existence of  an optimum $\mu_{h_i}$ for
 maximum   $\mathbb{P}_{\mathrm {cov}}^{\mathrm{sig}}$. 

\begin{figure}
\centering
  \includegraphics[scale=0.51]{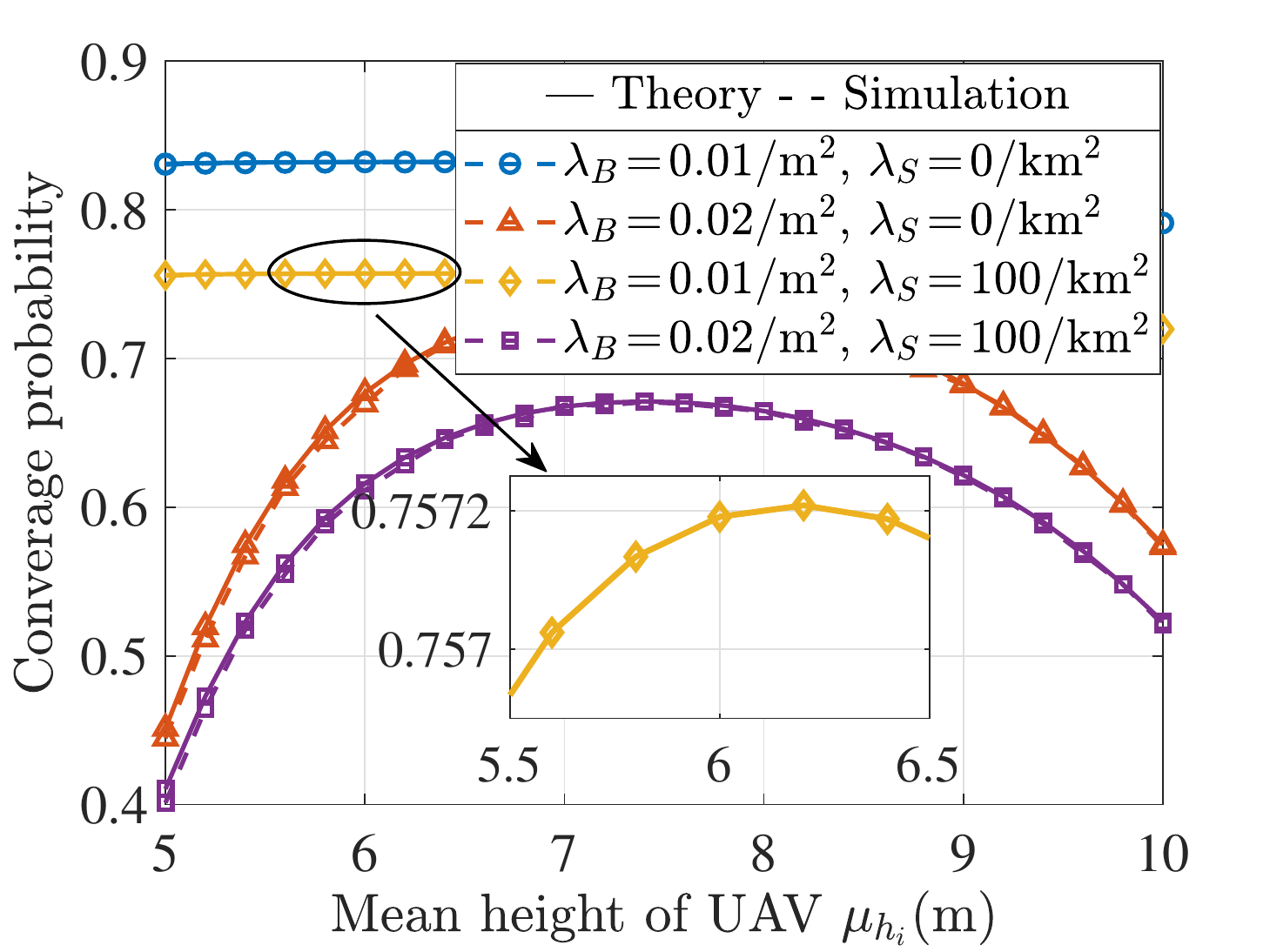}\\ \caption{
   Coverage probability $\mathbb{P}_{\mathrm {cov}}^{\mathrm{sig}}$ of the single UAV case 
  versus   mean 
  height of UAV $\mu_{h_i}$ for  different blocker densities,  where $\sigma=0.2$ m, $\theta=\pi/3$, and $\mu_{r_i}=55$ m.  As we can see, there exists an optimal $\mu_{h_i}$ that maximize $\mathbb{P}_{\mathrm {cov}}^{\mathrm{sig}}$.
 }\label{fig:min_Pout}
\end{figure}

\begin{figure}
\centering
  \includegraphics[scale=0.54]{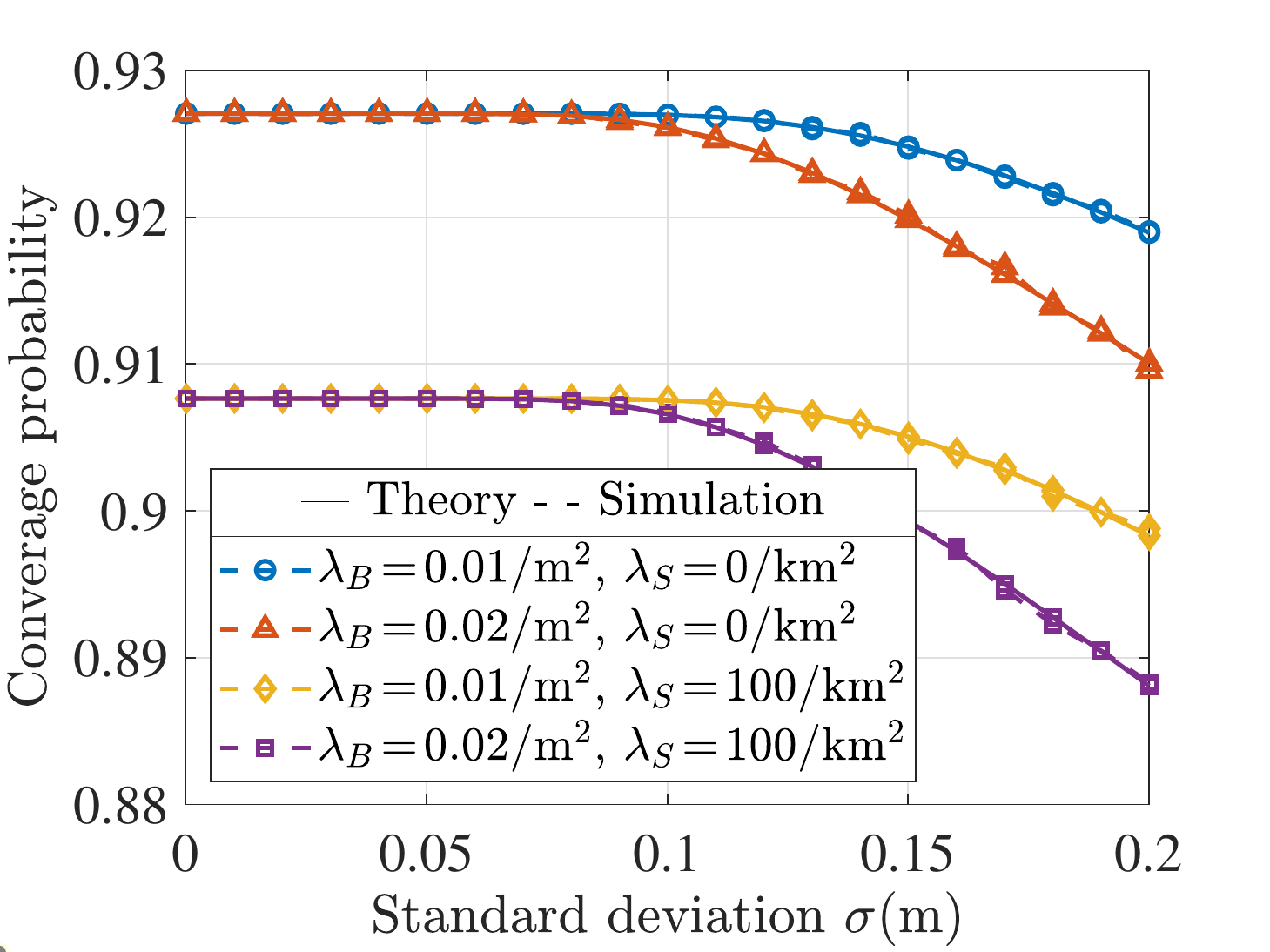}\\ \caption{Coverage probability 
  $\mathbb{P}_{\mathrm {cov}}^{\mathrm{mul}}$ of the multiple UAVs case 
  versus  UAV position fluctuation   strength $\sigma$  with  different blocker densities,  where $\lambda_T\!=\!100/$km$^2$,
    $\mu_{h_i}=25$ m, $\min(\mu_{r_i})\!=\!45$ m,  $\max(\mu_{r_i})\!=\!50$ m, $i=1,\cdots,n$,
and $\theta=\pi/3$. As we can see,    the larger the $\sigma$ is, the smaller the  $\mathbb{P}_{\mathrm {cov}}^{\mathrm{mul}}$ is.}\label{fig:multiple_effect_sigma_Pcov}
\end{figure}

\subsubsection{Multiple UAVs case}
Fig. \ref{fig:multiple_effect_sigma_Pcov}  plots  the effect of  $\sigma$ on    coverage probability $\mathbb{P}_{\mathrm {cov}}^{\mathrm{mul}}$. Similar to the single UAV case in 
Fig. \ref{fig:effect_sigma_Pout_densitys.eps},
 we can see that in a low $\sigma$ region ($\sigma\!<\!0.05$),  the  $\mathbb{P}_{\mathrm {cov}}^{\mathrm{mul}}$ for different  $\lambda_B$ is the same and hardly changes with $\sigma$. However,
when $\sigma$  is larger, the  $\mathbb{P}_{\mathrm {cov}}^{\mathrm{mul}}$ decreases with the increase of $\sigma$. Meanwhile,  when comparing Fig. \ref{fig:multiple_effect_sigma_Pcov} and Fig. \ref{fig:effect_sigma_Pout_densitys.eps}, it is obvious that $\mathbb{P}_{\mathrm {cov}}^{\mathrm{mul}}>\mathbb{P}_{\mathrm {cov}}^{\mathrm{sig}}$, the reason is that using multiple UAVs can  reduce the blockage of the link, thus improving the SNR and the coverage performance. Note that in this
case,   we assume that  all UAVs are distributed on a ring with an inner diameter of $\min(\mu_{r_i})$ an outer diameter of $\min(\mu_{r_i})+5$ m.

\begin{figure}
\centering
  \includegraphics[scale=0.54]{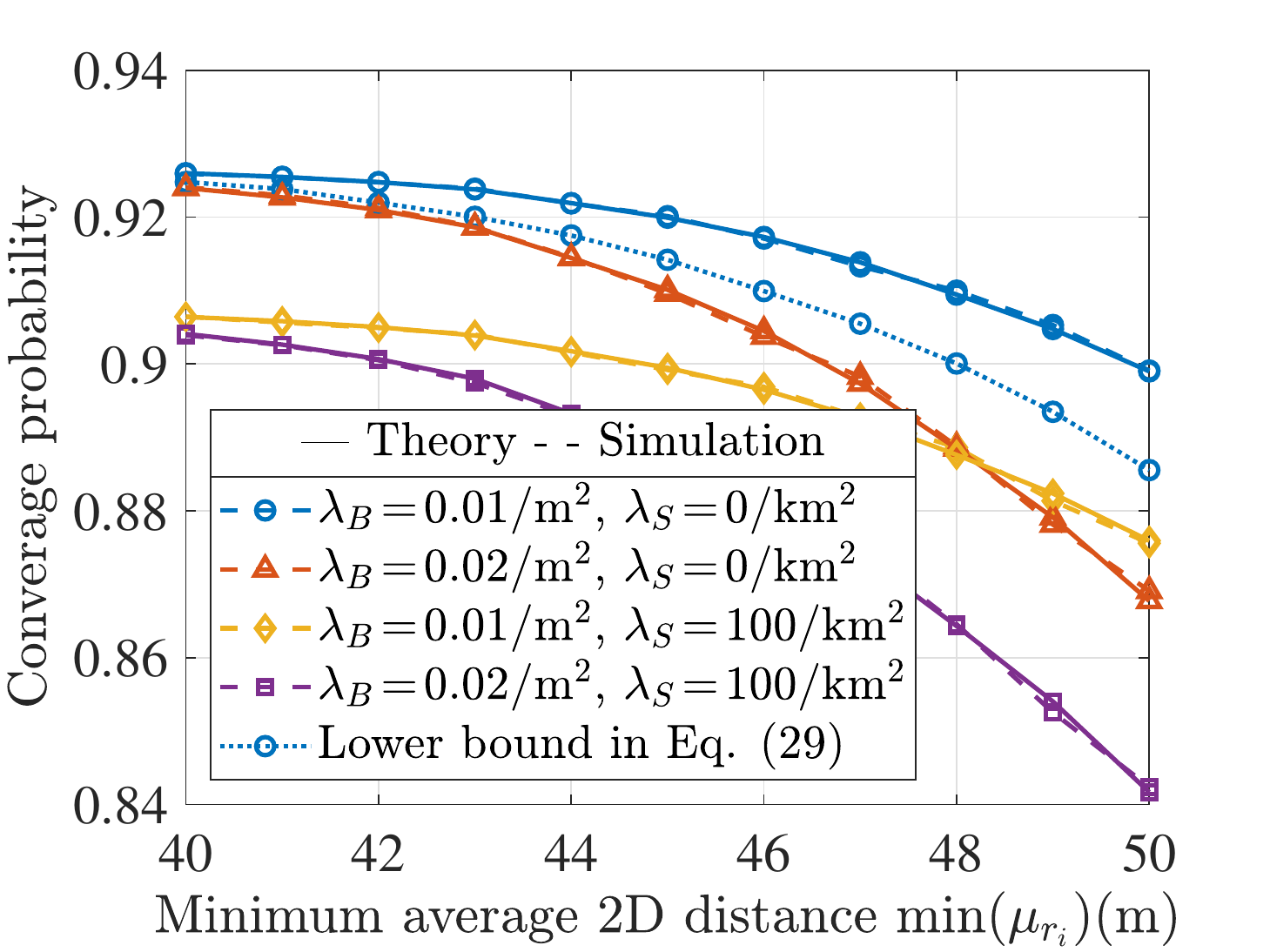}\\ \caption{Coverage probability $\mathbb{P}_{\mathrm {cov}}^{\mathrm{mul}}$ 
  of the multiple UAVs case
  versus minimum  average
  2D  distance $\min(\mu_{r_i})$,$i=1,\cdots,n$ from the UAV to the user,  where  $\sigma=0.2$ m   and other parameters are the same as in Fig.   \ref{fig:multiple_effect_sigma_Pcov}.  As we can see, the
   larger the $\min(\mu_{r_i})$ is, the smaller the $\mathbb{P}_{\mathrm {cov}}^{\mathrm{mul}}$ is, and a
   lower bound for $\mathbb{P}_{\mathrm {cov}}^{\mathrm{mul}}$
  in \eqref{eq:Corollary_cov} is given.  
 }\label{fig:multiple_min_ri_Pcov_Pcov_lower_bound}
\end{figure} 

\begin{figure}
\centering
  \includegraphics[scale=0.51]{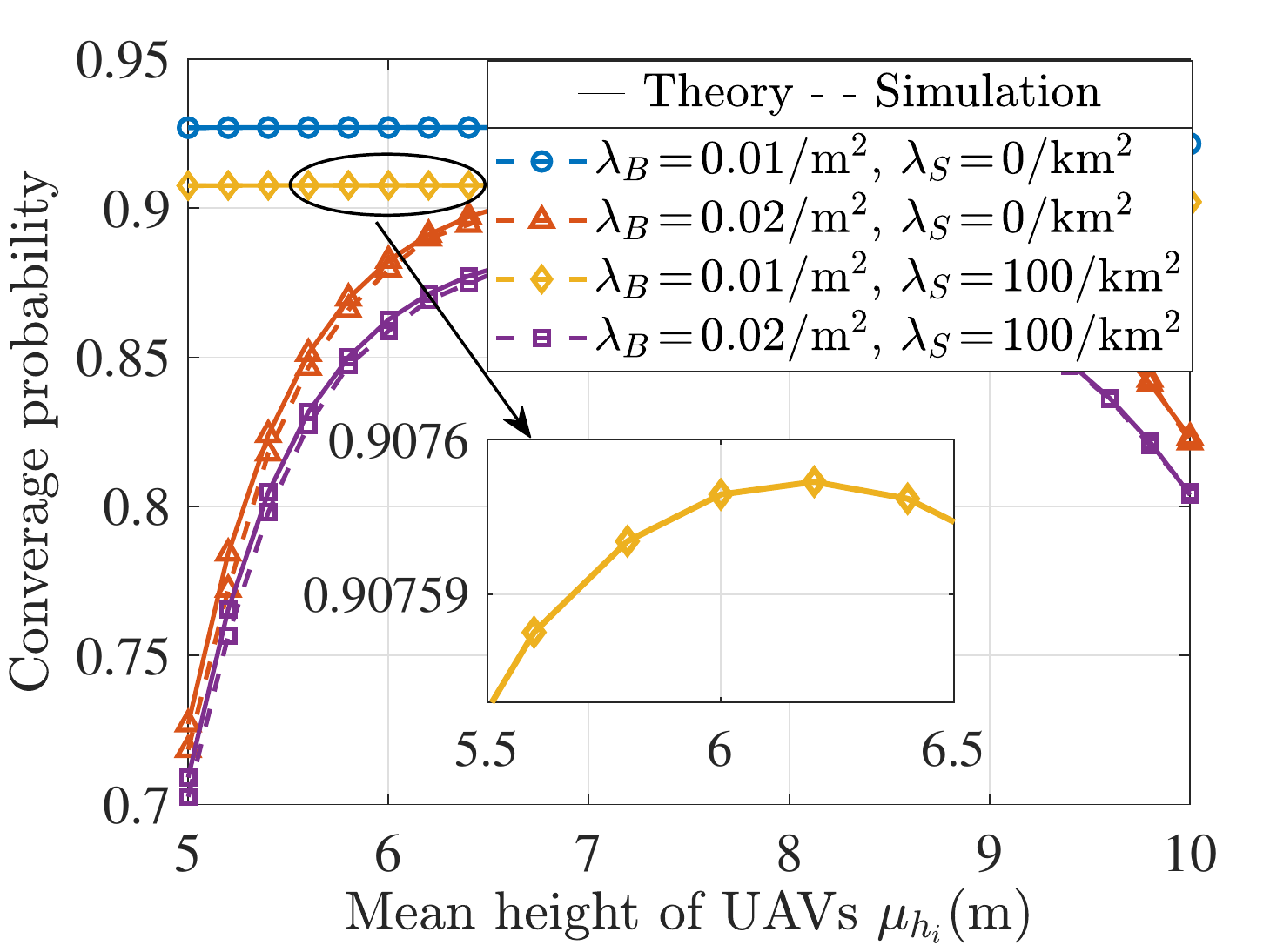}\\ \caption{
  Coverage probability  $\mathbb{P}_{\mathrm {cov}}^{\mathrm{mul}}$ of  the multiple UAVs case
  versus mean 
  height of UAVs $\mu_{h_i}$ 
  for  different blocker densities,  where $\sigma\!=\!0.2$ m, $\theta\!=\!\pi/3$, and $\min(\mu_{r_i})\!=\!55$ m.  As we can see, there exists an optimal $\mu_{h_i}$ that maximize  $\mathbb{P}_{\mathrm {cov}}^{\mathrm{mul}}$.
 }\label{fig:multiple_max_Pcov.eps}
\end{figure}

Fig. \ref{fig:multiple_min_ri_Pcov_Pcov_lower_bound}
shows the impact of $\min(\mu_{r_i})$ on 
$\mathbb{P}_{\mathrm {cov}}^{\mathrm{mul}}$. It is clear that the  $\mathbb{P}_{\mathrm {cov}}^{\mathrm{mul}}$ decrease with the increase of $\min(\mu_{r_i})$,  the reason is similar to the analysis of  Fig. \ref{fig:effect_sigma_Pout_densitys_low_r.eps} of the single UAV case, and 
  is  also consistent with the general conclusion  that the service quality of edge users is generally the worst.  In addition,   Fig.  \ref{fig:multiple_min_ri_Pcov_Pcov_lower_bound}
 compares the 
 $\mathbb{P}_{\mathrm {cov}}^{\mathrm{mul}}$ and a low bound of 
 $\mathbb{P}_{\mathrm {cov}}^{\mathrm{mul}}$ in \eqref{eq:Corollary_cov} 
 for $\lambda_B=0.01$ and $\lambda_S=0$. We  observe that the value of $\mathbb{P}_{\mathrm {rel}}^{\mathrm{mul}}$ in \eqref{eq:P_cov_mul2}  is slightly greater than the lower bound of 
 $\mathbb{P}_{\mathrm {cov}}^{\mathrm{mul}}$, which is reasonable. 
Finally,  Fig. \ref{fig:multiple_max_Pcov.eps}
 proves that there is an optimal
 $\mu_{h_i}$ for
 maximum  
 $\mathbb{P}_{\mathrm {cov}}^{\mathrm{mul}}$. The reason is  a combination of two factors, as described in
the analysis of Fig. \ref{fig:min_Pout}.

\section{Conclusion}
In this paper, we studied the critical issues affecting the QoS  of air-to-ground mmWave UAV communication systems, especially those arising from the random position fluctuations of  hovering UAVs. Accordingly, we   considered the reliable service probability respective to links blockages, and the coverage probability respective to  SNR as the 
system's key QoS measures.
 Then, we derived the 
closed-form expressions  to  evaluate the impact of UAV position fluctuations
 on these two QoS metrics.  Our  results indicated that the larger the position fluctuations of the UAV, the smaller the reliable service probability, and the smaller the coverage probability. 
Therefore,  unlike  ground mmWave communications, the QoS  of the air-to-ground mmWave UAV  communication system largely depends on the position fluctuations of the UAV. Specifically, the  fluctuation  will  cause random changes in the   blockage characteristic and the SNR of the  link, thus resulting in  unreliable communications.
Fortunately, we theoretically analyzed the effect of UAV position fluctuations on link blockages and SNR, and the relevant  results made it possible to quickly  evaluate the QoS of the  system under different levels of UAV position fluctuations. In addition, 
 according to the simulation results, we   find the  optimal 
  horizontal position and height of the UAV to  maximize the reliable service probability and  the  coverage probability, respectively, which helps establish 
  reliable  mmWave UAV communications. 
\begin{appendices}

\section{Proof of Theorem 1}
We  denote the PDF of   $\phi_i(h_i,r_i)$  as $f_{\phi_i}(\phi_i)$, and according to \eqref{eq:P_B''},
$f_{\phi_i}(\phi_i)$
can first be  expressed by
  taking the derivative of the corresponding CDF  
  $\mathbb{P} \left(\frac{\rho{r_i}}{{\rho r_i+\omega(h_i-h_R)}}\le\phi_i\right)$, i.e.,
$f_{\phi_i}(\phi_i)=\frac{\partial \mathbb{P} \left(\frac{\rho{r_i}}{{\rho r_i+\omega(h_i-h_R)}}\le\phi_i\right)} {\partial \phi_i}$. 
Therefore, it is important to analyze the CDF of  $\frac{\rho{r_i}}{{\rho r_i+\omega(h_i-h_R)}}$ in order to obtain  $f_{\phi_i}(\phi_i)$. 
Since 
$x_i$ and $y_i$ are   Gaussian distributed, $r_i\!=\!\sqrt{x_i^2+y_i^2}$
 follows a Rician distribution \cite{2018Channel}, and the PDF  can be denoted as $f_{r_i}({r_i})=\frac{r_i}{\sigma^2}{\exp{\left(-{\frac{{r_i^2}+\mu_{r_i}^2}{2{\sigma}^2}}\right)}}I_0\left(\frac{r_i \mu_{r_i}}{\sigma^2}\right)$,
 where $\mu_{r_i}\!=\!\sqrt{\mu^2_{x_i}\!+\!\mu^2_{y_i}}$, 
 $I_0(\cdot)$  denotes the modified Bessel
function of the first kind  of zero order.  In general, the 2D distance from the UAV to the user is far greater than the fluctuation strength of the UAV, i.e.,
 $\mu_{r_i}\!\gg\!\sigma$. Therefore,   $r_i$  can be
 well approximated by a Gaussian random variable
 according to \cite{sijbers1998maximum,pajevic2003parametric}, and then, the PDF of $r_i$ 
can be rewritten as follows:
\begin{align} \label{eq:Gaussian_r}
     f_{r_i}({r_i})\approx\frac{1}{\sqrt{2\pi}\sigma}{\exp\left({-{\frac{{(r_i-\mu_{r_i})}^{2}}{2{\sigma}^2}}}\right)}.
\end{align}
 Then,
 we can get 
$\rho{r_i}\!\sim\! N (\rho \mu_{r_i}, \rho^2\sigma^2)$
and $\rho{r_i}\!+\!\omega(h_i-h_R)\! \sim\! N \left(\rho \mu_{r_i}+\omega(\mu_{h_i}-h_R), (\rho^2+\omega^2)\sigma^2\right)$. Finally, 
 using the conclusions in \cite{2013existence,hinkley1969ratio},  $\frac{\rho{r_i}}{{\rho r_i+\omega(h_i-h_R)}}$ can be well modeled as a  Gaussian
distribution with   mean $\mu_{\phi_i}$ and variance $\sigma_{\phi_i}^2$. Among them,  $\mu_{\phi_i}$ and $\sigma_{\phi_i}^2$
are  given as follows:
\begin{align}
    \mu_{\phi_i}&=\frac{\rho{\mu_{r_i}}}{{\rho \mu_{r_i}\!+\!\omega(\mu_{h_i}\!-\!h_R)}},  \label{eq:P_B} \\
     \sigma_{\phi_i}^2\!&=\! \frac{ \rho^2\sigma^2}{\left(\rho \mu_{r_i}\!+\!\omega(\mu_{h_i}\!-\!h_R)\right)^2}+\frac{(\rho \mu_{r_i})^2(\rho^2+\omega^2)\sigma^2}{\left(\rho \mu_{r_i}\!+\!\omega(\mu_{h_i}\!-\!h_R)\right)^4}.  \label{eq:P(B|C)}
\end{align}
Therefore, $f_{\phi_i}(\phi_i)$ can finally be  obtained as   shown in \eqref{eq:PDF_PB}. 

\section{Proof of Corollary 1}
First, 
we take the derivative of  \eqref{eq:P_B} with respect to $\mu_{r_i}$ and   have
\begin{align}  \label{eq:sigle2}
  \frac{\partial \mu_{\phi_i}}{\partial \mu_{r_i}}=\frac{\rho\omega(\mu_{h_i}-h_R)}{\left(\rho \mu_{r_i}\!+\!\omega(\mu_{h_i}-h_R)\right)^2}>0.
 \end{align}
The reason why \eqref{eq:sigle2} is greater than 0 is that in the considered communication system, generally, the  height of the UAV  is higher than that of the user, i.e., $\mu_{h_i}>h_R$. Therefore,
 $\mu_{\phi_i}$ is  an increasing function of $\mu_{r_i}$. Next, to  obtain the  relationship between $\mu_{r_i}$ and  $\sigma_{\phi_i}$, 
we denote the first and second terms of $\sigma^2_{\phi_i}$  in \eqref{eq:P(B|C)} as $\sigma^2_{\phi_{i,1}}$  and $\sigma^2_{\phi_{i,2}}$ respectively. Then,  we can get
\begin{align} \label{eq:first term}
   \frac{\partial \sigma^2_{\phi_{i,1}}}{\partial \mu_{r_i}}\!=\!-\frac{2\rho^3\sigma^2}{\left(\rho \mu_{r_i}\!+\!\omega(\mu_{h_i}-h_R)\right)^3}<0, 
\end{align}
which indicates that $\sigma^2_{\phi_{i,1}}$   decreases with the increase of $\mu_{r_i}$. Similarly, we have
\begin{align}  \label{eq:second term}
\frac{\partial \sigma^2_{\phi_{i,2}}}{\partial \mu_{r_i}}\!=\!\frac{2\mu_{r_i}(\rho^2\!+\!\omega^2)\rho^2\sigma^2}{\left(\rho \mu_{r_i}\!+\!\omega(\mu_{h_i}-h_R)\right)^5}{\left(\rho \mu_{r_i}\!-\!\omega(\mu_{h_i}\!-\!h_R)\right)}<0,
\end{align}
which means that $\sigma^2_{\phi_{i,2}}$ is a decreasing function of $\mu_{r_i}$. 
Then, $\sigma^2_{\phi_i}$ also is a decreasing function of $\mu_{r_i}$. 
Therefore, we can conclude that   $\sigma_{\phi_i}$
 is a decreasing function of $\mu_{r_i}$.
The reason why  \eqref{eq:second term}
 is less than 0 is  as follows: In our system, 
$v$ is the speed of the moving humans,
$h_B$ and  $h_R$ are 
 the height of the moving humans and   user, respectively. Generally speaking, these parameters are fixed values in the considered communications scenario, and  $v({h_B-h_R})=0.4$ by using Table II. As such, $\rho={2{\lambda_B}v({h_B-h_R})}/{\pi}$ is only a fraction of  $\lambda_B$, and  even for $\lambda_B$ as high as $0.1$ bl/m$^2$\cite{2019The}, the value of $\rho$ is only 0.025. Therefore,  $\rho \mu_{r_i}\le\rho R =2.5$.
 However, $\mu_{h_i}$ is usually several times or even more than ten times of $h_R$
 and $\omega=2$. All in all, $\rho \mu_{r_i}\!-\!\omega(\mu_{h_i}\!-\!h_R)<0$.

\section{Proof of Lemma 1}   \label{Appendix C}
We assume that the blockage of each  link is  independent. Therefore,
given the total number of UAVs $M$, the  distribution of the number of available UAVs  $N$   can be expressed as \cite{2019The}:
\begin{align}   \label{eq:binomial}
 \mathbb{P}_{N|M}(n|m)=\binom{m}{n}\mathbb{P}(C_i)^n\left(1-\mathbb{P}(C_i)\right)^{m-n}.
\end{align}
Since $M$  is Poisson distributed,  $ \mathbb{P}_{N}(n)$   can be obtained as $\mathbb{P}_{N}(n)=\sum_{m=0}^{\infty} 
\mathbb{P}_{N|M}(n|m)\mathbb{P}_{M}(m)$ \cite[Lemma 2]{2019The}, where $\mathbb{P}_{M}(m)$ is given in Subsection \ref{UAVs}.
Therefore,  we can finally get
\begin{align}
   \mathbb{P}_{N}(n)
    &=\frac{{[ \mathbb{P}(C_i)\lambda_T\pi{R}^{2}]}^{n}}{n!}\exp\left({-\mathbb{P}(C_i)\lambda_T\pi{R}^{2}}\right),
\end{align}
where the available probability $\mathbb{P}(C_i)$ is calculated as follows:
Using  \eqref{eq:static1} and \eqref{eq:self},
the conditional probability  that the $i$-th  UAV
 is available  is $\mathbb{P}(C_i|h_i,r_i)\!=\!
 \left(1-\frac{\theta}{2\pi}\right)\exp{\left(-(\epsilon r_i+\epsilon_0)\right)}$,
and it is obviously that $ \mathbb{P}(C_i|h_i,r_i)=\mathbb{P}(C_i|r_i)$.
Therefore, to obtain  the marginal  probability $\mathbb{P}(C_i)$, we  first need to take the average of $\mathbb{P}(C_i|r_i)$ over the distribution of $r_i$, and  according to \eqref{eq:Gaussian_r},  $r_i$ follows a Gaussian distribution with mean value 
$\mu_{r_i}$.
However, $\mu_{r_i}$ is still a random variable since the center positions of the UAVs are modeled as a PPP. As a result, we  can get 
 \begin{align}  \label{eq:{P}(C_i|r_i)}
    \mathbb{P}(C_i|\mu_{r_i})&\!=\!\int_{r_i}  \mathbb{P}(C_i|h_i,r_i)f_{r_i}({r_i})d{r_i} \nonumber \\
   &\!\approx\!\left(\!1\!-\!\frac{\theta}{2\pi}\!\right)\exp{\left(\!-\epsilon \mu_{r_i}\!-\!\epsilon_0\!+\!\frac{\epsilon^2\sigma^2}{2}\right)}
   \mathrm{erf}\!\left(\frac{3\!+\!\epsilon\sigma}{\sqrt{2}}\!\right),   
\end{align}
 the
 PDF of  $\mu_{r_i}$ is $f_{\mu_{r_i}}(\mu_{r_i})=\frac{2\mu_{r_i}}{{R}^{2}}; 0<\mu_{r_i}\leq R, \forall i=1,\cdots,m.$ \cite{2019The}.
  Therefore, we  can  get
 \begin{align} \label{eq:C_i}
       \mathbb{P}(C_i)
&=\int_{\mu_{r_i}}\mathbb{P}(C_i|\mu_{r_i})f_{\mu_{r_i}}(\mu_{r_i})d{\mu_{r_i}}  \nonumber \\
&\approx\left(1\!-\!\frac{\theta}{2\pi}\right)\mathrm{erf}\left(\frac{3\!+\!\epsilon\sigma}{\sqrt{2}}\right)\frac{2\exp{(\!-\epsilon_0\!+\!\frac{\epsilon^2\sigma^2}{2})}}{R^2\epsilon^2} 
\left(1\!-\!(1\!+\!R\epsilon )\exp{(\!-R\epsilon )}\right).
\end{align}
For further insight, 
we   further approximate $\mathbb{P}(C_i)$ as follows: 
We know that
 $\mathrm{erf}\left(\frac{3}{\sqrt{2}}\right)\approx1$, and $3+\epsilon\sigma\ge3$ due to 
 $\epsilon\!=\!\frac{2} {\pi}\lambda_S(\mathbb{E}(l)\!+\!\mathbb{E}(w))\!>\!0$. Therefore,
  $ \mathrm{erf}\left(\frac{3+\epsilon\sigma}{\sqrt{2}}\right)\!\approx\!1$ according to the characteristics of error function. Then,
 since the size of  the buildings are $\mathbb{E}(l)\times\mathbb{E}(w)$, the maximum value of density of the buildings is $\lambda_S^{\max}\!=\!\frac{1}{{\mathbb{E}(l)}\mathbb{E}(w)}$, and    the 
maximum value of $\epsilon$ is given by
$\epsilon_{\max}=\frac{2} {\pi}\lambda_S^{\max}(\mathbb{E}(l)+\mathbb{E}(w))=\frac{2}{\pi}\left(\frac{1}{\mathbb{E}(w)}+\frac{1}{\mathbb{E}(l)}\right) < \frac{4}{\pi}\approx1$,
where the   inequality
is obtained based on the fact that the value of  $\mathbb{E}(w)$ and $\mathbb{E}(l)$ are  larger than 1 in practice, and the result indicates that $\epsilon\!<\!1$,
 i.e.,  $\epsilon\!>\! \epsilon^2$. We have
$\epsilon_0\!=\!\lambda_S\mathbb{E}(l)\mathbb{E}(w)\!>\!0$, 
 $\mu_{r_i}\!\gg\!\sigma$ and $\sigma<1$ in practice. Hence, $\epsilon \mu_{r_i}\!+\!\epsilon_0\!\gg\!\epsilon^2\sigma^2$, and  the value of 
 $-(\epsilon \mu_{r_i}\!+\!\epsilon_0)\!+\!\frac{\epsilon^2\sigma^2}{2}$  mainly depends on $-(\epsilon \mu_{r_i}\!+\!\epsilon_0)$, i.e.,  $\exp{\left(-\epsilon \mu_{r_i}\!-\!\epsilon_0\!+\!\frac{\epsilon^2\sigma^2}{2}\right)}\!\approx\!\exp{\left(-(\epsilon \mu_{r_i}\!+\!\epsilon_0\right))}$, 
 which verifies our  insights in  Remark 3. Therefore we can assume that $\mathbb{P}(C_i|\mu_{r_i})\approx \left(1\!-\!\frac{\theta}{2\pi}\right)\exp{\left(\!-\!(\epsilon \mu_{r_i}\!+\!\epsilon_0)\right)}$, and then,
 \begin{align} \label{eq:C_i_appro}
       \mathbb{P}(C_i)
&\!\approx\!\left(1\!-\!\frac{\theta}{2\pi}\right)\frac{2\exp{(\!-\epsilon_0)}}{R^2\epsilon^2}\left(1\!-\!(1\!+\!R\epsilon )\exp{(\!-R\epsilon )}\right),
\end{align}
Fig. \ref{fig:P_ci}
 further illustrates the accuracy of this approximation.

\section{Proof of Theorem 4}
According to \eqref{eq:Pcov_single}, we can get
$\mathbb{P}_{\mathrm {cov}}^{\mathrm{sig}}=\mathbb{P}(C_i)\mathbb{P} \left(\gamma_{i} \ge\gamma_0 \right)$,
where $\mathbb{P}(C_i)$ is given in \eqref{eq:C_i_appro}. Hence, we only need to calculate $\mathbb{P} \left(\gamma_{i} \ge\gamma_0 \right)$.
With the aid of   \eqref{eq:LoS}, \eqref{eq:gain} and
\eqref{eq:SNR}, we can get
\begin{align} \label{eq:P_out}
  \mathbb{P} \left(\gamma_{i} \ge\gamma_0 \right)=\mathbb{P}\left({\beta_0 \mathbb{P}^{\rm{LoS}}_i}{d_i^{-\alpha}}\ge\frac{N_0}{P_t}\gamma_0\right)
\end{align}
which is
related to the CDF of
${\beta_0 \mathbb{P}^{\rm{LoS}}_i}{d_i^{-\alpha}}$. Therefore, to obtain  $\mathbb{P}_{\mathrm {cov}}^{\mathrm{sig}}$, it is necessary  
 to calculate the CDF mentioned above
  by considering  random variables $h_{i}$, $r_{i}$, and $d_{i}$.
 To this end, we  define  an indicator random variable  $Z$ as follows:
 \begin{align}  \label{eq:Z}
 Z={\beta_0 \mathbb{P}^{\rm{LoS}}_i}{d_i^{-\alpha}},
 \end{align}
 and analyze the CDF of $Z$.  
First, 
we have $h_{i}\!\sim\! N (\mu_{h_i}, \sigma^2)$ and $r_{i}\!\sim\! N (\mu_{r_i}, \sigma^2)$.
 Therefore,  
  $d_{i}\!=\!\sqrt{r_{i}^2\!+\!(h_i\!-\!h_R)^2}$ is  Rician distributed, and the PDF is  $f_{d_{i}}({d_{i}})\!=\!\frac{d_{i}}{\sigma^2}{\exp{\left(\!-{\frac{{d_{i}^2}\!+\!\mu_{d_i}^2}{2{\sigma}^2}}\right)}}I_0\left(\frac{d_{i}\mu_{d_i}}{\sigma^2}\right)$, 
 where $\mu_{d_i}\!=\!\sqrt{\mu_{r_i}^2+(\mu_{h_i}\!-\!h_R)^2}$. Then, 
using  \eqref{eq:LoS} and \eqref{eq:P_B''}, we have $\mathbb{P}^{\rm{LoS}}_i\!=1-{\phi_i}(h_i,r_i)$, and $\phi_i(h_i,r_i)$ follows a normal distribution as shown in Theorem \ref{Theorem1}. Therefore, 
$\beta_0\mathbb{P}^{\rm{LoS}}_i$
also follows a normal distribution with mean $\beta_0(1-
   \mu_{\phi_i})$ and
variance $\beta_0^2\sigma_{\phi_i}^2$ where 
$\mu_{\phi_i}$ and $\sigma_{\phi_i}^2$
are given in \eqref{eq:P_B}
 and
 \eqref{eq:P(B|C)}, respectively.
Next, we will prove that the randomness of $Z$ mainly depends on $d_{i}$. In this case, the task of calculating the CDF of $Z$ can be converted to calculating the CDF of $d_{i}$. We first prove that 
the random variable $\beta_0\mathbb{P}^{\rm{LoS}}_i$  can be approximated by its mean   
$\beta_0(1-
   \mu_{\phi_i})$.
Using the proof of  Corollary \ref{Corollary4}, we  know 
  that $\rho\le0.025$, 
$\rho\mu_{r_i}\le 2.5$, and
$\rho$
is several
orders of magnitude smaller than that of $\omega(\mu_{h_i}\!-\!h_R)$ for our system setup. Therefore, we first can get 
$\rho^2\sigma^2 \ll \left(\omega(\mu_{h_i}\!-\!h_R)\right)^2$ since $\sigma<1$ in practice. Then, we have
  $\frac{ \rho^2\sigma^2}{\left(\rho \mu_{r_i}\!+\!\omega(\mu_{h_i}\!-\!h_R)\right)^2}\!\ll\!\frac{\omega(\mu_{h_i}\!-\!h_R)(\rho \mu_{r_i}\!+\!\omega(\mu_{h_i}\!-\!h_R))}{{\left(\rho \mu_{r_i}\!+\!\omega(\mu_{h_i}\!-\!h_R)\right)^2}}$. Moreover, we can get $(\rho \mu_{r_i})^2(\rho^2+\omega^2)\sigma^2\le2.5^2(0.025^2+4)0.04\approx1$  with the aid of  Table II. Therefore,  $\frac{(\rho \mu_{r_i})^2(\rho^2+\omega^2)\sigma^2}{\left(\rho \mu_{r_i}\!+\!\omega(\mu_{h_i}\!-\!h_R)\right)^4}\ll\frac{\omega(\mu_{h_i}\!-\!h_R)(\rho \mu_{r_i}\!+\!\omega(\mu_{h_i}\!-\!h_R))^3}{{\left(\rho \mu_{r_i}\!+\!\omega(\mu_{h_i}\!-\!h_R)\right)^4}}$ due to $\mu_{h_i}$ is usually several times or even more than ten times of $h_R$
and $\omega=2$. In addition, we have
$\beta_{0}=7\!\times\!10^{\!-5}$\cite{20213D}. 
Based on the above discussion, we can conclude that $\beta_0^2\left(\frac{ \rho^2\sigma^2}{\left(\rho \mu_{r_i}\!+\!\omega(\mu_{h_i}\!-\!h_R)\right)^2}+\frac{(\rho \mu_{r_i})^2(\rho^2+\omega^2)\sigma^2}{\left(\rho \mu_{r_i}\!+\!\omega(\mu_{h_i}\!-\!h_R)\right)^4}\right)\ll\frac{\beta_0\omega(\mu_{h_i}\!-\!h_R)}{{\rho \mu_{r_i}\!+\!\omega(\mu_{h_i}\!-\!h_R)}}$, i.e., $\beta_0^2\sigma_{\phi_i}^2\ll\beta_0(1-
   \mu_{\phi_i})$. Therefore, $\beta_0^2\sigma_{\phi_i}^2$ 
has little impact on $\beta_0\mathbb{P}^{\rm{LoS}}_i$.
As a result,  we can assume that $\beta_0\mathbb{P}^{\rm{LoS}}_i\approx\beta_0(1-
   \mu_{\phi_i})$, and then, substituting \eqref{eq:P_B} into \eqref{eq:Z},
$Z$ can be rewritten as
 \begin{align} \label{eq:Z1}
 Z\approx\frac{\beta_0\omega(\mu_{h_i}-h_R)}{d_i^{\alpha}\left(\rho \mu_{r_i}+\omega(\mu_{h_i}-h_R)\right)},
 \end{align}
 and  the  objective of calculating the CDF of $Z$ can be simplified as calculating the CDF of $d_{i}$.
Let us denote the CDF of $d_{i}$ as $F_{d_{i}}(d)$. Since $d_{i}$ is Rician distributed,
$F_{d_{i}}(d)$
can be  expressed in terms of  the  Marcum Q-function  as follows \cite{kam2006new,2002Closed}: 
\begin{align}  \label{eq:CDF}
    F_{d_{i}}(d)=1-Q_1\left(\frac{\mu_{d_{i}}}{\sigma},\frac{d}{\sigma}\right)
\end{align}
where  $Q_1(a,b)$ is the Marcum Q-function and is defined as
\begin{align}  \label{eq:Q}
Q_1(a,b)=\int_{b}^{\infty}x\exp{\left(-\frac{a^2+x^2}{2}\right)}I_0(ax)d{x}.
\end{align}
Finally, with the aid of \eqref{eq:Pcov_single}, \eqref{eq:P_out}, \eqref{eq:Z}, \eqref{eq:Z1} and \eqref{eq:CDF}, we have
\begin{align}  \label{eq:closed}
    \mathbb{P}_{\mathrm {cov}}^{\mathrm{sig}}
    &\approx\mathbb{P}(C_i)\mathbb{P}\left(\frac{\beta_0\omega(\mu_{h_i}-h_R)}{d_i^{\alpha}\left(\rho \mu_{r_i}+\omega(\mu_{h_i}-h_R)\right)} \!\ge\!\frac{N_0}{P_t}\gamma_0\right) \nonumber \\
&=\mathbb{P}(C_i)\mathbb{P}\left(d_{i} \le\sqrt[{\alpha}]{\frac{P_t\beta_0\omega(\mu_{h_i}\!-\!h_R)}{N_0\gamma_0({
\rho \mu_{r_i}+\omega(\mu_{h_i}\!-\!h_R)})}
}\right) \nonumber \\
    &=\mathbb{P}(C_i)\left(1\!-\!Q_1\left(\frac{\mu_{d_i}}{\sigma},\frac{\sqrt[{\alpha}]{\frac{P_t\beta_0\omega(\mu_{h_i}\!-\!h_R)}{N_0\gamma_0({
\rho \mu_{r_i}+\omega(\mu_{h_i}\!-\!h_R)})}
}}{\sigma}\right)\right),
\end{align}

\end{appendices}

\bibliographystyle{IEEEtran}
\bibliography{IEEEabrv,UAV}
\end{document}